\providecommand{\norm}[1]{\left \lVert#1 \right  \rVert}
\newcommand{\argmax}{\operatornamewithlimits{argmax}}
\newcommand{\argmin}{\operatornamewithlimits{argmin}}
\newcommand{\EE}{\mathbb{E}}           % expected value
\newcommand{\GG}{\mathbb{G}}
\newcommand{\PP}{\mathbb{P}}         
\newcommand{\QQ}{\mathbb{Q}}
\newcommand{\thma}{\begin{thm}}
\newcommand{\thmb}{\end{thm}}
\newcommand{\mata}{\begin{bmatrix}}
\newcommand{\matb}{\end{bmatrix}}
\newtheorem{thm}{Theorem}
\newtheorem{lem}{Lemma}
\newtheorem{defi}{Definition}
\newtheorem{coro}[thm]{Corollary}
\newcommand{\enuma}{\begin{enumerate}}
\newcommand{\enumb}{\end{enumerate}}
\newcommand{\ena}{\begin{enumerate}}
\newcommand{\enb}{\end{enumerate}}
\newcommand{\itema}{\begin{itemize}}
\newcommand{\itemb}{\end{itemize}}
\newcommand{\ita}{\begin{itemize}}
\newcommand{\itb}{\end{itemize}}
\newcommand{\proofa}{\begin{proof}}
\newcommand{\proofb}{\end{proof}}
\newcommand{\bla}{\begin{block}}
\newcommand{\blb}{\end{block}}
\newcommand{\seqb}{\end{equation*}}
\newcommand{\hth}{\mh{\theta}}
\providecommand{\mc}[1]{\mathcal{#1}}
\providecommand{\mb}[1]{\boldsymbol{#1}}
\providecommand{\mh}[1]{\hat{#1}}
\providecommand{\mt}[1]{\widetilde{#1}}
\providecommand{\mtc}[1]{\widetilde{\mathcal{#1}}}
\providecommand{\mht}[1]{\mh{\mt{#1}}}
\newcommand{\conv}{\rightarrow}
\newcommand{\pconv}{\overset{p}{\conv}}
\newcommand{\defa}{\begin{defi}}
\newcommand{\defb}{\end{defi}}
\newcommand{\Qs}{Q}
\title{\vspace{-75pt}Shuffled Graph Classification:\\ Theory and Connectome Applications}
\begin{document}
\maketitle	
% \tableofcontents
% \newpage

\begin{abstract}
	We develop a formalism to address statistical pattern recognition of graph valued data.  Of particular interest is the case of all graphs having the same number of uniquely labeled vertices. When the vertex labels are latent, such graphs are called \emph{shuffled graphs}.  Our formalism  provides insight to trivially answer a number of open statistical questions including: (i) under what conditions does shuffling the vertices degrade classification performance and (ii) do universally consistent graph classifiers exist?   The answers to these questions lead to practical heuristic algorithms with state-of-the-art finite sample performance, in agreement with our theoretical asymptotics.  
% Extending these results to directed, loopy, weighted and certain attributed random graph models is straightforward.  
\end{abstract}

% keywords: statistical pattern recognition; random graphs; graph matching; connectomics

\section{Introduction} \label{sec:1}

Representing data as graphs is becoming increasingly popular, as technological progress facilitates measuring ``connectedness'' in a variety of domains, including social networks, trade-alliance networks, and brain networks.  While the theory of pattern recognition is deep \cite{Devroye1996}, previous theoretical efforts regarding pattern recognition almost invariably assumed data are collections of vectors.  Here, we assume data are collections of graphs (where each graph is a set of vertices and a set of edges connecting the vertices).  For some data sets, the vertices of the graphs are \emph{labeled}, that is, one can identify the vertex of one graph with a vertex of the others (note that this is a special case of assuming vertices are labeled, where each vertex has a unique label).  For others, the labels are unobserved and/or assumed to not exist.  We investigate the theoretical and practical implications of the absence of vertex labels.  

These implications are especially important in the emerging field of ``connectomics'', the study of connections of the brain \cite{Hagmann05, Sporns2010}.  In connectomics, one represents the brain as a graph (a brain-graph), where vertices correspond to (groups of) neurons and edges correspond to connections between them.  In the lower tiers of the evolutionary hierarchy (e.g., worms and flies), many neurons have been assigned labels \cite{WhiteBrenner86}.  However, for even the simplest vertebrates, vertex labels are mostly unavailable when vertices correspond to neurons.  

Classification of brain-graphs is therefore poised to become increasingly popular.  Although previous work has demonstrated some possible strategies of graph classification in both the labeled \cite{VP11_sigsub} and unlabeled \cite{Duin2011} scenarios, relatively little work has compared the theoretical limitations of the two.  We therefore develop a random graph model amenable to such theoretical investigations.  The theoretical results lead to universally consistent graph classification algorithms, and practical approximations thereof.  We demonstrate that the approximate algorithm has desirable finite sample properties via a real brain-graph classification problem of significant scientific interest: sex classification.

\section{Graph Classification Models} % (fold)
\label{sec:shuffler_graph_class_models}

\subsection{A labeled graph classification model} % (fold)
\label{sub:a_labeled_graph_classification_model}

% subsection a_labeled_graph_classification_model (end)

A labeled graph $G=(\mc{V},\mc{E})$ consists of a vertex set $\mc{V}$, where $|\mc{V}|=n < \infty$ is the number of vertices, and an edge set $\mc{E}$, where $|\mc{E}| \leq n^2$.
\begin{defi}
Let $\GG\colon \Omega \to \mc{G}_n$ be a \emph{labeled} graph-valued random variable taking values $G\in \mc{G}_n$, where $\mc{G}_n$ is the set of labeled graphs on $n$ vertices.	
\end{defi}
The cardinality of $\mc{G}_n$ is super-exponential in $n$.  For example, when all labeled graphs are assumed to be simple (that is, undirected binary edges without loops), then $|\mc{G}_n|=2^{\binom{n}{2}}=d_n$. 
Let $Y$ be a categorical random variable, $Y\colon \Omega \to \mc{Y}=\{y_0,\ldots, y_{c}\}$, where $c< \infty$.  Assume the existence of a joint distribution, $\PP_{\GG,Y}$ which can be decomposed into the product of a class-conditional distribution (likelihood) $\PP_{\GG|Y}$ and a class prior $\pi_Y$. Because $n$ is finite, the class-conditional  distributions $\PP_{\GG | Y=y}=\PP_{\GG|y}$ can be considered discrete distributions $\text{Discrete}(G; \theta_y)$, where $\theta_y$ is an element of the $d_n$-dimensional unit simplex $\triangle_{d_n}$ (satisfying $\theta_{G|y}\geq 0$ $\forall G \in \mc{G}_n$ and $\sum_{G \in \mc{G}_n} \theta_{G|y}=1$).

\subsection{A shuffled graph classification model} % (fold)
\label{sub:a_shuffled_graph_classification_model}

% subsection a_shuffled_graph_classification_model (end)

% OLD PARAGRAPH BEGINS:
In the above, it was implicitly assumed that the vertex labels were observed.  However, in certain situations (such as the motivating connectomics example presented in Section \ref{sec:1}), this assumption is unwarranted.  To proceed, we define two graphs $G,G' \in \mc{G}_n$ to be isomorphic if and only if there exists a vertex permutation (shuffle) function $\Qs\colon\mc{G}_n \to \mc{G}_n$ such that $\Qs(G)=G'$.  Let $\QQ$ be a permutation-valued random variable, $\QQ\colon \Omega \to \mc{Q}_n$, where $\mc{Q}_n$ is the space of vertex permutation functions on $n$ vertices so that $|\mc{Q}_n|=n!$.  
% OLD PARAGRAPH ENDS
\begin{defi} \label{def:shuffled}
Let $\GG'=\QQ(\GG): \Omega \to \mc{G}_n$ be a \emph{shuffled} graph-valued random variable, that is, a labeled graph valued random variable that has been passed through a random shuffle channel $\QQ$. %Note that $|\mc{G}_n'|=|\mc{G}_n|$. % taking values $G\in \mc{G}_n$, where $\mc{G}_n$ is the set of labeled graphs on $n$ vertices.	
\end{defi}

Extending the above graph-classification model to include this vertex shuffling distribution yields $\PP_{\QQ,\GG,Y}$.  We assume throughout this work (with loss of generality) that the shuffling distribution is both \emph{class independent} and \emph{graph independent}; therefore, this joint model can be decomposed as
\begin{align}
	\PP_{\QQ,\GG,Y} = \PP_{\QQ} \PP_{\GG,Y} = \PP_{\QQ} \PP_{\GG |Y} \pi_Y = \PP_{\QQ(\GG) |Y} \pi_Y.
\end{align}
As in the labeled case, the shuffled graph class-conditional distributions $\PP_{\QQ(\GG)|y}$ can be represented by discrete distributions $\text{Discrete}(G; \theta_y')$.  Because $\QQ(\GG)$ can be any of $|\mc{G}_n|$ different graphs, it must be that $\theta_y' \in \triangle_{d_n}$.  When $\PP_{\QQ}$ is uniform on $\mc{Q}_n$, all shuffled graphs within the same isomorphism set are equally likely; that is  $\{\theta_{G_i|y}' = \theta_{G_j|y}' \, \forall G_i,G_j \colon \Qs(G_i)=G_j$ for some $\Qs \in \mc{Q}_n\}$.

Note that one can think of a labeled graph as a shuffled graph for which $\QQ$ is a point mass at  $Q=I$, where $I$ is the identity matrix.

\subsection{An unlabeled graph classification model} % (fold)
\label{sub:an_unlabeled_graph_classification_model}

% subsection an_unlabeled_graph_classification_model (end)

The above shuffling view is natural whenever the vertices of the collection of graphs share a set of labels, but the labeling function is unknown.   However, when the vertices of the collection of graphs have different labels, perhaps a different view is more natural.

An \emph{unlabeled graph} $\mt{G}$ is the collection of graphs isomorphic to one another, that is, $\mt{G}=\{Q(G)\}_{Q \in \mc{Q}_n}$. Let $\mt{G}$ be an element of the collection of graph isomorphism sets $\mt{\mc{G}}_n$. The number of unlabeled graphs on $n$ vertices is $|\mt{\mc{G}}_n|=\mt{d}_n \approx d_n/n!$ (see \cite{A000088} and references therein).
% , and $|\mt{\mc{G}}_n|=\mt{d}_n$.  
An \emph{unlabeling function} $U\colon \mc{G}_n \to \mt{\mc{G}}_n$ is a function that takes as input a graph and outputs the corresponding unlabeled graph. 
\begin{defi}
Let $\mt{\GG}=U(\GG)\colon \Omega \to \mt{\mc{G}}_n$ be an \emph{unlabeled} graph-valued random variable, that is, a labeled graph-valued random variable that has been passed through an unlabeled channel. In other words, $\mt{\GG}=\{Q(\GG)\}_{Q \in \mc{Q}_n}$, and takes values $\mt{G} \in \mt{\mc{G}}_n$. %$\colon \Omega \to \mt{\mc{G}}_n$ be an
\end{defi}  
The joint distribution over unlabeled graphs and classes is therefore
$\PP_{\mt{\GG},Y}=\PP_{U(\GG),Y}=\PP_{U(\QQ(\GG)),Y}$, which decomposes as $\PP_{\mt{\GG}|Y} \pi_Y$. The class-conditional distributions $\PP_{\mt{\GG} | y}$ over isomorphism sets (unlabeled graphs) can also be thought of as discrete distributions $\text{Discrete}(\mt{G}; \mb{\mt{\theta}}_y)$ where $\mb{\mt{\theta}}_y\in \triangle_{\mt{d}_n}$ are vectors in the $\mt{d}_n$-dimensional unit simplex.   Comparing shuffling and unlabeling for the independent and uniform shuffle distribution $\PP_{\QQ}$, we have $\{\theta_{G|y}'=\mt{\theta}_{\mt{G}|y}/|\mt{G}|$ for all $G \in \mt{G}\}$.

\section{Bayes Optimal Graph Classifiers} % (fold)
\label{sec:bayes_optimal_graph_classifiers}

We consider graph classification in the three scenarios described above: labeled, shuffled, and unlabeled.  To proceed, in each scenario we define three mathematical objects: (i) a graph classifier, (ii) risk, (iii), the Bayes optimal classifier, and (iv) the Bayes risk.

% \begin{table}[h!]
% % \caption{table}
% \begin{center}
% \begin{tabular}{|c|c|c|c|}
% \hline
% & labeled & shuffled & unlabeled \\ \hline
% graph-classifier & $h\colon \mc{G}_n \to \mc{Y}$ &  $h\colon \mc{G}_n \to \mc{Y}$ & $\mt{h}\colon \mt{\mc{G}}_n \to \mc{Y}$ \\
% risk & $L(h)=\EE_{\PP_{\GG,Y}}[h(\GG)\neq Y]$ & $L'(h)=\EE_{\PP_{\QQ(\GG),Y}}[h(\QQ(\GG)) \neq Y]$ & $\mt{L}(\mt{h})=\EE_{\PP_{\mt{\GG},Y}}[\mt{h}(\mt{\GG})\neq Y]$  \\
% Bayes classifier & $h_* = \argmin_{h \in \mc{H}} L(h)$ & $h_*' = \argmin_{h \in \mc{H}} L'(h)$ & $ \mt{h}_* = \argmin_{\mt{h} \in \mt{\mc{H}}} L(\mt{h})$ \\
% Bayes risk &$L_*=L(h_*)$ & $L_*'=L(h_*')$ & $\mt{L}_*=\mt{L}(\mh{h}_*)$ \\
% parametric classifier & 
% $h_*(G) = \argmax_{y \in \mc{Y}} \theta_{G|y}\pi_y$ & %, \label{eq:La_Bayes} \\
% $h_*'(G) = \argmax_{y \in \mc{Y}} \theta_{G|y}' \pi_y$ & %,  \label{eq:Sh_Bayes} \\
% $\mt{h}_*(\mt{G}) = \argmax_{y \in \mc{Y}}  \mt{\theta}_{\mt{G}|y}\pi_y$ %.
% \\
% % labeled-$1$NN & Shuffled-$1$NN & GM-$1$NN & $\phi(G)$-$1$NN & Chance\\
% % \hline
% % $37\%$ & $52\%$ & $41\%$ & $43\%$ & $49\%$ \\
% \hline
% \end{tabular}
% \end{center}
% % \label{tab:connectome}
% \end{table}%

\subsection{Bayes Optimal Colored Graph Classifiers} % (fold)
\label{sub:labeled_graph_classifiers}

% subsection labeled_graph_classifiers (end)

% \begin{itemize}
	% \item 
	A \emph{labeled graph classifier} $h\colon \mc{G}_n \to \mc{Y}$ is any function that maps from labeled graph space to class space. The risk of a labeled graph classifier $h$ under $0-1$ loss is the expected misclassification rate $L(h)=\EE[h(\GG)\neq Y]$, where the expectation is taken against $\PP_{\GG,Y}$. The \emph{labeled graph Bayes optimal classifier} is given by
	\begin{align}
		h_* = \argmin_{h \in \mc{H}} L(h),
	\end{align}
	where $\mc{H}$ is the set of possible labeled graph classifiers.	The \emph{labeled graph Bayes risk} is given by $L_*=L(h_*)$, 
	% \begin{align}
	% 	L_* =\min_{h \in \mc{H}} L(h),
	% \end{align}
	where $L_*$ implicitly depends on $\PP_{\GG,Y}$.
% \end{itemize}

\subsection{Bayes Optimal Shuffled Graph Classifiers} % (fold)

A \emph{shuffled graph classifier} is also any function $h\colon \mc{G}_n \to \mc{Y}$ (note that the set of shuffled graphs is the same as the set of labeled graphs). However, by virtue of the input being a shuffled graph as opposed to a labeled graph, the shuffled risk under $0-1$ loss is given by $L'(h)=\EE[h(\QQ(\GG)) \neq Y]$, where the expectation is taken against $\PP_{\QQ(\GG),Y}$. %As above, 
% \begin{align}
% h'_*=\argmin_{h \in \mc{H}} L'(h).	
% \end{align}
% The \emph{shuffled Bayes risk}, $L'_*=L'(h'_*)$, depends on $\PP_{\QQ,\GG,Y}$.
% \defa
% 
The \emph{shuffled graph Bayes optimal classifier} is given by
\begin{align}
	h_*' &= \argmin_{h \in \mc{H}} L'(h), %, \qquad %\\
\end{align}
where $\mc{H}$ is again the set of possible labeled (or shuffled) graph classifiers. The \emph{shuffled graph Bayes risk} is given by $L_*'=L(h_*')$,
% \begin{align}
% 	% h_*' &= \argmin_{h \in \mc{H}} L'(h), \qquad %\\
% 	L_*' =\min_{h \in \mc{H}} L'(h),
% \end{align}
where  $L_*'$ implicitly depends on $\PP_{\QQ(\GG),Y}$.  % \defb

\subsection{Bayes Optimal Unlabeled Graph Classifiers} % (fold)

An \emph{unlabeled} graph classifier $\mt{h}\colon \mt{\mc{G}}_n \to \mc{Y}$ is any function that maps from unlabeled graph space to class space. The risk under $0-1$ loss is given by $\mt{L}(\mt{h})=\EE[\mt{h}(\mt{\GG})\neq Y]$, where the expectation is taken against $\PP_{\mt{\GG},Y}$. 
% 
% \defa
The \emph{unlabeled graph Bayes optimal classifier} is given by %$\mt{h}_*$
\begin{align}
	\mt{h}_* &= \argmin_{\mt{h} \in \mt{\mc{H}}} L(\mt{h}), \qquad %\\
	% \mt{L}_* =\min_{\mt{h} \in \mt{\mc{H}}} L(\mt{h}),
\end{align}
% 
% 
% where $\mt{L}_*$ implicitly depends on $\PP_{\mt{\GG},Y}$.  
% 
The \emph{unlabeled graph Bayes risk} is given by $\mt{L}_*=L(\mt{h}_*)$,
% are given by
% \begin{align}
% 	% \mt{h}_* &= \argmin_{\mt{h} \in \mt{\mc{H}}} L(\mt{h}), \qquad %\\
% 	\mt{L}_* =\min_{\mt{h} \in \mt{\mc{H}}} L(\mt{h}),
% \end{align}
where $\mt{\mc{H}}$ is the set of possible unlabeled graph classifiers
and $\mt{L}_*$ implicitly depends on $\PP_{\mt{\GG},Y}$.  
% \defb

\subsection{Parametric Graph Classifiers} % (fold)
\label{sub:parametric_graph_classifiers}

The three Bayes optimal graph classifiers can be written explicitly in terms of their model parameters:
\begin{align}
	h_*(G) &= \argmax_{y \in \mc{Y}} \theta_{G|y}\pi_y, \label{eq:La_Bayes} \\
	h_*'(G) &= \argmax_{y \in \mc{Y}} \theta_{G|y}' \pi_y,  \label{eq:Sh_Bayes} \\
	\mt{h}_*(\mt{G}) &= \argmax_{y \in \mc{Y}}  \mt{\theta}_{\mt{G}|y}\pi_y. \label{eq:Un_Bayes}
\end{align}

% subsection parametric_graph_classifiers (end)

% \section{Theoretical Implications of Shuffling} % (fold)
% \label{sec:theoretical_results}
% 
% In this section, we address the three theoretical questions posed above.
% As mentioned above, we assume independent and uniform shuffling throughout.

% section theoretical_results (end)
\section{Under what conditions does shuffling the vertices degrade classification performance?} % (fold)
\label{sec:shuffle}

The result of either shuffling or unlabeling a graph can only degrade, but not improve Bayes risk.  This is a restatement of the data processing lemma for this scenario. Specifically, \cite{Devroye1996} shows that the data processing lemma indicates that in the classification domain $L^*_X \leq L^*_{T(X)}$ for any transformation $T$ and data $X$.  In our setting, this becomes:

\begin{lem} \label{thm:1}
$L_* \leq \mt{L}_*=L_*'$.
\end{lem}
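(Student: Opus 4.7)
The plan is to establish the two claims in Lemma \ref{thm:1} separately: first, $L_* \leq L_*'$ via a data-processing argument, and second, $L_*' = \mt{L}_*$ via a direct identification of the two Bayes optimal classifiers using the parametric formulas \eqref{eq:Sh_Bayes} and \eqref{eq:Un_Bayes}.

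For $L_* \leq L_*'$, I would fix any shuffled classifier $h' \in \mc{H}$ and, for each realization $q \in \mc{Q}_n$ of $\QQ$, define the labeled classifier $h'_q(G) := h'(q(G))$. Its labeled risk is $L(h'_q) = \EE[h'(q(\GG)) \neq Y]$, and since $L_*$ infimizes over $\mc{H}$, $L_* \leq L(h'_q)$ for every $q$. Averaging over $\PP_\QQ$ and using the independence of $\QQ$ from $(\GG,Y)$ gives $L_* \leq \EE_\QQ[L(h'_\QQ)] = \EE[h'(\QQ(\GG)) \neq Y] = L'(h')$; infimizing over $h'$ then yields $L_* \leq L_*'$. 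For $L_*' = \mt{L}_*$ I would exploit the identity $\theta_{G|y}' = \mt{\theta}_{\mt{G}|y}/|\mt{G}|$ established in the excerpt under uniform independent shuffling. The factor $|\mt{G}|^{-1}$ depends only on the isomorphism class of $G$ and not on $y$, so inserting it into \eqref{eq:Sh_Bayes} gives
\begin{align*}
	h_*'(G) = \argmax_{y \in \mc{Y}} \theta_{G|y}' \pi_y = \argmax_{y \in \mc{Y}} \mt{\theta}_{\mt{G}|y} \pi_y = \mt{h}_*(U(G)),
\end{align*}
i.e., the shuffled Bayes classifier is constant on each isomorphism class. Since $U(\QQ(\GG)) = U(\GG) = \mt{\GG}$, this implies $L_*' = \EE[h_*'(\QQ(\GG)) \neq Y] = \EE[\mt{h}_*(\mt{\GG}) \neq Y] = \mt{L}_*$.

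The main obstacle I anticipate is the subtlety around the shuffling distribution: the equality $\mt{L}_* = L_*'$ genuinely relies on $\PP_\QQ$ being uniform over $\mc{Q}_n$, so that the conditional law of $\QQ(\GG)$ given $\mt{\GG}$ carries no residual information about $Y$. Without uniformity, the shuffled Bayes classifier could in principle exploit structure within an isomorphism class that is invisible to an unlabeled classifier, and the equality could break; the argument above therefore applies only in the uniform-shuffle setting used when the $\mt{\theta}$ parametrization was introduced.
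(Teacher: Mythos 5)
Your proof is correct, but it takes a genuinely different route from the paper's. The paper proves the whole lemma in a single displayed chain under the simplifying assumption of two classes with equal priors: it writes each Bayes risk as a sum of pointwise minimum likelihoods, uses $\sum_{G\in\mt{G}}\theta'_{G|y}=\mt{\theta}_{\mt{G}|y}=\sum_{G\in\mt{G}}\theta_{G|y}$ to get $\mt{L}_*=L_*'$, and then the elementary bound $\min_y\sum_{G\in\mt{G}}\theta_{G|y}\geq\sum_{G\in\mt{G}}\min_y\theta_{G|y}$ to get $L_*\leq\mt{L}_*$. You instead split the claim: for the inequality you give a true data-processing/randomization argument (compose a shuffled classifier with each fixed permutation to obtain a labeled classifier, bound each by $L_*$, and average over $\PP_{\QQ}$), and for the equality you show the shuffled Bayes classifier is constant on isomorphism classes and coincides with $\mt{h}_*\circ U$. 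Your inequality argument is more general---it needs neither binary classes, nor equal priors, nor uniform shuffling, only independence of $\QQ$ from $(\GG,Y)$---whereas the paper's explicit computation buys, as a by-product, the exact condition for strictness that becomes Lemma \ref{thm:2}. Your closing caveat is also well taken: the equality $\mt{L}_*=L_*'$ really does hinge on uniformity of $\PP_{\QQ}$ (for a non-uniform, e.g.\ degenerate, shuffle one only gets $L_*'\leq\mt{L}_*$ by data processing, possibly strictly), a dependence the paper leaves implicit by invoking the flattening identity $\theta'_{G|y}=\mt{\theta}_{\mt{G}|y}/|\mt{G}|$, which was stated only for the uniform case.
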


\begin{proof}
	Assume for simplicity $|\mc{Y}|=2$ and $\pi_0=\pi_1=1/2$.  
\begin{align} \label{eq:thm1}
	\mt{L}_*&=\sum_{\mt{G} \in \mt{\mc{G}}_n} \min_y  \mt{\theta}_{\mt{G}|y}  
	=\sum_{\mt{G} \in \mt{\mc{G}}_n} \min_y  \sum_{G \in \mt{G}}\theta_{G|y}'  = L_*' \nonumber \\
	&=\sum_{\mt{G} \in \mt{\mc{G}}_n} \min_y  \sum_{G \in \mt{G}}\theta_{G|y} 
	 \geq \sum_{\mt{G} \in \mt{\mc{G}}_n} \sum_{G \in \mt{G}} \min_y  \theta_{G|y}  = L_*.
	% & \geq \sum_{\mt{G} \in \mt{\mc{G}}_n}\sum_{G' \in \mt{G}}\min_y\PP_y(G')=L_*.
\end{align}
% It is trivially true that $\mt{L}_*=L_*'$.
\end{proof}

An immediate consequence of the above proof is that the inequality in the statement of Lemma \ref{thm:1} strict whenever the inequality in Eq. \eqref{eq:thm1} is strict:

\begin{lem} \label{thm:2}
	$L_* < \mt{L}_*=L'_*$ if and only if there exists $\mt{G}$ such that
	% $$\min_y \mt{\PP}_{y}(\mt{G}) > \sum_{G' \in \mt{G}}\min_y\PP_y(G').$$
	$$\min_y \mt{\theta}_{\mt{G}|y} > \sum_{G \in \mt{G}}\min_y \theta_{G|y}.$$
\end{lem}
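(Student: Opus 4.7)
The plan is to reduce the biconditional to the elementary fact that a finite sum of non-negative reals is strictly positive if and only if at least one summand is. I will retain the simplifying assumption $|\mc{Y}|=2$ with $\pi_0=\pi_1=1/2$ used in Lemma \ref{thm:1}; reintroducing general priors merely rescales each summand by a strictly positive constant and does not affect signs.

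First, I would record the pointwise identity $\mt{\theta}_{\mt{G}|y}=\sum_{G\in\mt{G}}\theta_{G|y}$ for every isomorphism class $\mt{G}$ and every label $y$. This is already implicit in the opening chain of equalities in the proof of Lemma \ref{thm:1}: the unlabeling map collapses each isomorphism class to a single atom so that $\mt{\theta}_{\mt{G}|y}=\sum_{G\in\mt{G}}\theta'_{G|y}$, and the shuffle channel maps each isomorphism class into itself so that $\sum_{G\in\mt{G}}\theta'_{G|y}=\sum_{G\in\mt{G}}\theta_{G|y}$.

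Next, partitioning $\mc{G}_n$ into isomorphism classes and invoking the identity above, I would rewrite the risk gap as
\[
\mt{L}_* - L_* \;=\; \sum_{\mt{G}\in\mt{\mc{G}}_n} \Delta(\mt{G}), \qquad \Delta(\mt{G}) \;:=\; \min_y \mt{\theta}_{\mt{G}|y} \;-\; \sum_{G\in\mt{G}} \min_y \theta_{G|y}.
\]
Superadditivity of $\min$ (precisely the single inequality appearing in Eq.~\eqref{eq:thm1}) gives $\Delta(\mt{G}) \geq 0$ for every $\mt{G}$. The biconditional then follows at once: $\mt{L}_* - L_* > 0$ if and only if $\Delta(\mt{G}) > 0$ for some $\mt{G}$, which is exactly the stated condition. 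The accompanying equality $\mt{L}_* = L_*'$ is inherited verbatim from Lemma \ref{thm:1} and requires no fresh argument.

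I do not anticipate any substantive obstacle. The only mildly delicate point is making the identity $\mt{\theta}_{\mt{G}|y} = \sum_{G\in\mt{G}} \theta_{G|y}$ fully explicit, since the paper has so far only stated the relationship between the three parameterizations $\theta$, $\theta'$, and $\mt{\theta}$ under the uniform shuffle; a one-sentence remark explaining that class- and graph-independence of $\PP_{\QQ}$ is the only property needed would make the argument completely transparent.
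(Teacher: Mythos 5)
Your proposal is correct and follows essentially the same route as the paper, which presents Lemma~\ref{thm:2} as an immediate consequence of the chain in Eq.~\eqref{eq:thm1}: the gap $\mt{L}_*-L_*$ is a sum over isomorphism classes of the non-negative terms $\Delta(\mt{G})$, so it is strictly positive exactly when some term is. Your explicit decomposition and the remark pinning down the identity $\mt{\theta}_{\mt{G}|y}=\sum_{G\in\mt{G}}\theta_{G|y}$ merely spell out what the paper leaves implicit.
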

The above result demonstrates that even when the labels \emph{do} carry some class-conditional signal, it may be the case that shuffling or unlabeling does not degrade performance.  In other words, the following two statements are equivalent: (i) the labels contain information with regard to the classification task, and (ii) some graphs within an isomorphism set are class-conditionally more likely than others: $\exists \, \theta_{G_i|y} \neq \theta_{G_j|y}$ where $\Qs(G_i)=G_j$ for some $G_i,G_j \in \mc{G}_n$, $\Qs \in \mc{Q}_n$, and $y \in \mc{Y}$%, and that $\theta_{G_i|y}\neq \theta_{G_i|y'}$ or $\theta_{G_j|y}\neq \theta_{G_j|y'}$
.  Uniform shuffling has the effect of ``flattening'' likelihoods within isomorphism sets, from $\theta_y$ to $\theta_y'$, so that $\theta_y'$ satisfies $\{\theta_{G|y}'=\mt{\theta}_{\mt{G}|y}/|\mt{G}| \, \forall \colon G \in \mt{G}\}$.  But just because the shuffling changes class-conditional likelihoods does \emph{not} mean that Bayes risk must also change. This result follows immediately upon realizing that posteriors can change without classification performance changing.  The above results are easily extended to consider non-equal class priors and $c$-class classification problems.  To see this, ignoring ties, simply replace each minimum likelihood with a sum over all non-maximum posteriors: 
% $$\min_y \PP_y(G) \to \sum_{y \in \mc{Y}'} \PP_y(G) \text{ where } \mc{Y}' =\{y : y \neq \argmax_y \PP_y\}.$$
\begin{align}
\min_y \theta_{G|y} \pi_y \mapsto \sum_{y \in \mc{Y}'} \theta_{G|y} \pi_y  \text{ where } \mc{Y}' =\{y \colon y \neq \argmax_y \theta_{G|y} \pi_y\}.
\end{align}

% \subsubsection{Graph Invariant Classifiers} % (fold)
% \label{ssub:graph_invariant_based_universally_consistent_classifiers}

Prior to concluding this section, we remark that one can achieve Bayes optimal risk using  graph invariants. A graph invariant on $\mc{G}_n$ is any function $\psi$  such that $\psi(G)=\psi(\Qs(G))$ for all $G \in \mc{G}_n$ and $\Qs \in \mc{Q}_n$ (note that an unlabeling function $U(G)$ is a special case of $\psi$).  A graph invariant classifier is a composition of a classifier with an invariant function, $h^\psi=f^\psi \circ \psi$.  The Bayes optimal graph invariant classifier minimizes risk over all invariants: 
\begin{align} \label{eq:psi}
	h_*^{\psi}=\argmin_{\psi \in \Psi, f^{\psi} \in \mc{F}^{\psi}} \EE[f(\psi(\GG))\neq Y],
\end{align}
where $\Psi$ is the space of all possible invariants and $\mc{F}^{\psi}$ is the space of classifiers composable with invariant $\psi$. The expectation in Eq. \eqref{eq:psi} is taken against $\PP_{\GG,Y}$ or equivalently $\PP_{\QQ(\GG),Y}$, since invariants are invariant.
Let $L_*^{\psi}$ denote the Bayes invariant risk.  
\begin{lem} \label{thm:3}
	$\mt{L}_*=L_*^\psi$.
\end{lem}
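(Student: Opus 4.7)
The plan is to prove both inequalities $\mt{L}_*\leq L_*^\psi$ and $L_*^\psi\leq \mt{L}_*$ by showing that the class of graph invariant classifiers and the class of unlabeled graph classifiers are, in a precise sense, the same: each graph invariant factors through the unlabeling function $U$, and conversely $U$ is itself a graph invariant.

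For the direction $L_*^\psi\leq \mt{L}_*$, I would use the fact explicitly noted in the excerpt that $U$ is an instance of $\psi$, i.e., $U\in\Psi$. Therefore the unlabeled Bayes optimal classifier $\mt{h}_*\colon\mt{\mc{G}}_n\to\mc{Y}$ composes with $U$ to yield a valid graph invariant classifier $\mt{h}_*\circ U\in\{f^\psi\circ\psi\colon \psi\in\Psi, f^\psi\in\mc{F}^\psi\}$. Since $\mt{\GG}=U(\GG)$, the risk of this invariant classifier is exactly $\mt{L}_*$, so the minimum over all invariant classifiers can be no larger.

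For the reverse direction $\mt{L}_*\leq L_*^\psi$, the key step is the factorization lemma: for any graph invariant $\psi$, since $\psi(G)=\psi(Q(G))$ for all $Q\in\mc{Q}_n$, $\psi$ is constant on each isomorphism set $\mt{G}\in\mt{\mc{G}}_n$. Therefore there exists a function $\mt\psi\colon\mt{\mc{G}}_n\to\text{range}(\psi)$ such that $\psi=\mt\psi\circ U$. Consequently any graph invariant classifier $h^\psi=f^\psi\circ\psi=f^\psi\circ\mt\psi\circ U$ is of the form $\mt h\circ U$ for some unlabeled classifier $\mt h=f^\psi\circ\mt\psi\in\mt{\mc{H}}$. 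Its risk equals $\EE[\mt h(\mt{\GG})\neq Y]\geq \mt L_*$, and taking the minimum over $\psi$ and $f^\psi$ yields $L_*^\psi\geq \mt L_*$.

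Combining the two directions gives the equality. The only substantive step is the factorization argument, which is essentially the universal property of the quotient $\mc{G}_n/\mc{Q}_n\cong\mt{\mc{G}}_n$; the rest is bookkeeping about which function class contains which. I do not anticipate any real obstacle beyond being careful that both the invariant $\psi$ and the downstream classifier $f^\psi$ are allowed to be arbitrary (measurability is not an issue here because all spaces involved are finite).
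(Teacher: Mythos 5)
Your proof is correct, and its first direction is exactly the paper's argument: the paper's entire proof consists of choosing $\psi=U$ (the map sending $G$ to its isomorphism set) and observing that the resulting Bayes-optimal invariant classifier $h_*^\psi(G)=\argmax_y \mt{\theta}_{\psi(G)|y}\pi_y$ coincides with $\mt{h}_*$, which gives $L_*^\psi\leq\mt{L}_*$. What you add --- and what the paper leaves implicit --- is the converse inequality via the factorization argument: every invariant $\psi$ is constant on isomorphism sets and hence factors as $\psi=\mt{\psi}\circ U$, so every invariant classifier $f^\psi\circ\psi$ is an unlabeled classifier precomposed with $U$ and therefore has risk at least $\mt{L}_*$. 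Strictly speaking this second half is needed to rule out the possibility that some other invariant outperforms $U$, so your version is the more complete one; the paper in effect asserts that the minimizer in the definition of $L_*^\psi$ is attained at $\psi=U$ without justifying it. Your remark that measurability is a non-issue because all spaces are finite is also the right thing to say, and matches the paper's standing finiteness assumptions.
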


\begin{proof}
Let $\psi$ indicate in which equivalence set $G$ resides; that is,  $\psi(G)=\mt{G}$ if and only if $G \in \mt{G}$.  Then
\begin{align}
	h^{\psi}_*(G) &= \argmax_{y \in \mc{Y}} \mt{\theta}_{\psi(G)|y} \pi_y \nonumber \\
	&= \argmax_{y \in \mc{Y}} \mt{\theta}_{\mt{G}|y} \pi_y = \mt{h}_*(G).
\end{align}
% Thus, this graph invariant based classifier is the unlabeled Bayes optimal classifier.
\end{proof}

% subsubsection graph_invariant_based_universally_consistent_classifiers (end)

\section{Do universally consistent graph classifiers exist?} % (fold)
\label{sec:gi}

% An estimated classifier is universally consistent whenever its risk converges to Bayes risk.  To prove universal consistency, we first introduce the Bayes plug-in graph classifiers.
Throughout this paper, we consider two distinct ``flavors'' of graph classifiers that we can estimate from the data: (i) Bayes plug-in  and (iii) nearest neighbor.  Below, we first introduce Bayes plug-in graph classifiers.  In the following sections, we will discuss their asymptotic properties, as well as the asymptotic properties of $k_s$ nearest neighbor graph classifiers.

% section flavors_of_graph_classification (end)

\subsection{Bayes Plug-In Graph Classifiers} % (fold)
\label{sub:parametric_classifiers}

Implementing the above optimal classifiers requires knowing the model parameters.
 % or access to training data.
%, $\mt{\theta}=\{\mt{\theta}_y\}_{y\in \mc{Y}}$ and $\mb{\pi}=\{\pi_y\}_{y \in \mc{Y}}$. 
When the parameters are unknown (effectively always), we assume that the data are sampled identically and independently from some unknown joint distribution: $(\QQ_i(\GG_i),Y_i)\overset{iid}{\sim}\PP_{\QQ,\GG,Y}$.  
For \emph{labeled} graph classification, $\PP_{\QQ}$ is assumed to be the identity function, therefore, $\mc{T}_s=\{(\GG_i,Y_i)\}_{i \in [s]}$, because when graphs are labeled $\QQ_i(\GG_i)=\GG_i$.
% the training data are $\mc{T}_s=\{(\GG_i,Y_i)\}_{i \in [s]}$, where $[s]=\{1,\ldots,s\}$ (because $\QQ_i$ is assumed to be a known point mass).  
For \emph{shuffled} graph classification $\PP_{\QQ}$ is assumed to be uniform over the permutation matrices, so that all label information is both unavailable and irrecoverable.  The training data are therefore $\mc{T}_s'=\{(\GG_i',Y_i)\}_{i \in [s]}$, where $\GG_i'=\QQ_i(\GG_i)$.  For \emph{unlabeled} graph classification the training data are again $\mc{T}_s'$. 
%We assume that $\PP_{\QQ}$ is uniform, so that all label information is both unavailable and irrecoverable.  
Our task is to utilize training data to induce a classifier
% $\mh{h}_s\colon \mc{G}_n \times (\mc{G}_n \times \mc{Y})^s \to \mc{Y}$ 
that approximates 
% $\mt{h}_*$ 
a Bayes classifier
as closely as possible.

% section learning_optimal_classifiers_from_data (end)

% \subsubsection{Bayes Plug-in Graph Classifiers} %} % (fold)
% \label{sec:bayes_optimal_graph_invariant_based_classifier}

A \emph{labeled} graph Bayes plugin classifier,
$\mh{h}_s: \mc{G}_n \times (\mc{G} \times \mc{Y})^s \to \mc{Y}$,
estimates the parameters $\{\theta_y,\pi_y\}_{y \in \mc{Y}}$ using the training data $\mc{T}_s=\{(\GG_i,Y_i)\}_{i \in [s]}$, and then plugs those estimates into the labeled Bayes classifier, Eq. \eqref{eq:La_Bayes}, resulting in
\begin{align} \label{eq:La_Plug}
	\mh{h}_s(G)=\argmax_{y\in\mc{Y}} \hth_{G|y}\mh{\pi}_y,
\end{align}
where the dependency on the training data is implicit in the $\mh{h}_s(G)$ notation.

A \emph{shuffled} graph Bayes plugin classifier,
$\mh{h}_s': \mc{G}_n \times (\mc{G} \times \mc{Y})^s \to \mc{Y}$,
estimates the parameters $\{\theta_y',\pi_y\}_{y \in \mc{Y}}$ using the training data $\mc{T}_s'=\{(\GG_i',Y_i)\}_{i \in [s]}$, and then plugs those estimates into the shuffled Bayes classifier, Eq. \eqref{eq:Sh_Bayes}, resulting in
\begin{align} \label{eq:Sh_Plug}
	\mh{h}_s'(G)=\argmax_{y\in\mc{Y}} \hth_{G|y}'\mh{\pi}_y.
\end{align}

An \emph{unlabeled} graph Bayes plugin classifier,
$\mh{\mt{h}}_s: \mtc{G}_n \times (\mtc{G}_n \times \mc{Y})^s \to \mc{Y}$,
first determines in which unlabeled set each shuffled graph resides, using $\psi$ as defined in Section \ref{sec:gi}.  Then, it estimates the parameters $\{\mt{\theta}_{\psi(G')|y}\}_{y \in \mc{Y}}$ and $\{\pi_y\}_{y \in \mc{Y}}$ using the training data $\mc{T}_s'$. Finally, it plugs those estimates into the unlabeled Bayes classifier, Eq. \eqref{eq:Un_Bayes}, resulting in
\begin{align} \label{eq:Un_Plug}
	\mh{\mt{h}}_s(G)=\argmax_{y\in\mc{Y}} \mh{\mt{\theta}}_{\mt{G}|y}\mh{\pi}_y.
\end{align}

For brevity, we will sometimes refer to the above three induced classifiers as simply ``classifiers''.  Moreover, the sequence of classifiers (for example, $\{h_s\}_{s \to \infty}$) we will also refer to as a ``classifier''.
% subsection parametric_classifiers (end)

\subsection{Bayes Plug-in Graph Classifiers are Universally Consistent} % (fold)
\label{sub:bayes_plug_in_classifiers_are_consistent}

% subsection bayes_plug_in_classifiers_are_consistent (end)

The three parametric classifiers, Eqs. \eqref{eq:La_Plug}--\eqref{eq:Un_Plug}, admit classifier estimators that exist, are unique, and moreover, are universally consistent, although the relative convergence rates and values that they converge to differ.

Let $\mh{L}_s=L(\mh{h}_s)$ be the risk of the induced \emph{labeled} graph Bayes plugin classifier using the training data $\mc{T}_s$ to obtain maximum likelihood estimators for $\{\theta_y,\pi_y\}_{y \in \mc{Y}}$. Note that $\mh{L}_s$ is a random variable, as it is a function of the random training data $\mc{T}_s$. This yields
\begin{lem} \label{thm:4}
	$\mh{L}_s \pconv L_*$ as $s \conv \infty$.
\end{lem}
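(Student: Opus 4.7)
The plan is to leverage the crucial fact that $\mc{G}_n$ is finite ($|\mc{G}_n|=d_n < \infty$) and $\mc{Y}$ is finite, so the joint law $\PP_{\GG,Y}$ is a discrete distribution on an alphabet of size $(c+1)d_n$ whose MLE is just the vector of empirical frequencies. Universal consistency will then follow from the law of large numbers applied to each coordinate, with no regularity or smoothness assumptions on $\PP_{\GG,Y}$.

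First I would write the MLEs explicitly: $\mh{\pi}_y = s^{-1}\sum_{i=1}^s \mathbf{1}\{Y_i=y\}$ and $\hth_{G|y} = (s\mh{\pi}_y)^{-1}\sum_{i:Y_i=y}\mathbf{1}\{\GG_i=G\}$ (defined arbitrarily, say as $1/d_n$, on the vanishing-probability event $\mh{\pi}_y = 0$, which has probability $(1-\pi_y)^s$). By the strong law of large numbers applied coordinate by coordinate, each $\hth_{G|y}\mh{\pi}_y$ converges almost surely to $\theta_{G|y}\pi_y$; since there are only $(c+1)d_n < \infty$ such coordinates, the maximum deviation $\max_{G,y}|\hth_{G|y}\mh{\pi}_y - \theta_{G|y}\pi_y|$ also vanishes almost surely.

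Second, I would convert parameter convergence into risk convergence via a standard plug-in Bayes argument. Observe that for each $G$,
\[
\sum_{y \in \mc{Y}} \theta_{G|y}\pi_y\bigl[\mathbf{1}\{\mh{h}_s(G)\neq y\} - \mathbf{1}\{h_*(G)\neq y\}\bigr] \,=\, \theta_{G|h_*(G)}\pi_{h_*(G)} - \theta_{G|\mh{h}_s(G)}\pi_{\mh{h}_s(G)} \,\geq\, 0,
\]
which, after adding and subtracting the plug-in masses $\hth_{G|y}\mh{\pi}_y$ at $y=h_*(G)$ and $y=\mh{h}_s(G)$ and invoking the optimality of $\mh{h}_s$ under the estimated parameters, is bounded above by $2\max_y|\hth_{G|y}\mh{\pi}_y - \theta_{G|y}\pi_y|$. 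Summing over the finite set $\mc{G}_n$ and taking expectations over $\GG$ yields
\[
0 \,\leq\, \mh{L}_s - L_* \,\leq\, 2\sum_{G \in \mc{G}_n}\sum_{y \in \mc{Y}}\bigl|\hth_{G|y}\mh{\pi}_y - \theta_{G|y}\pi_y\bigr|,
\]
and the right-hand side is a finite sum of almost-surely-vanishing terms, so $\mh{L}_s \pconv L_*$.

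The only potential obstacle is cosmetic: ties in either $\argmax$ could force $\mh{h}_s$ and $h_*$ to disagree on a positive-mass set, but this is harmless because the bound above controls risk rather than pointwise agreement of the two classifiers, and the event $\mh{\pi}_y=0$ is similarly absorbed. Universality is immediate since the argument uses no structural property of $\PP_{\GG,Y}$ beyond its being a discrete distribution on the finite set $\mc{G}_n\times\mc{Y}$; the very same template, applied to the parameters $\theta'_{G|y}$ or $\mt{\theta}_{\mt{G}|y}$ instead of $\theta_{G|y}$, will produce the analogous consistency statements for the shuffled and unlabeled plug-in classifiers.
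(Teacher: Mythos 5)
Your proof is correct and rests on the same essential observation as the paper's proof, namely that $\mc{G}_n \times \mc{Y}$ is finite, so the maximum likelihood estimates are just empirical frequencies that exist, are unique, and converge by the law of large numbers. The paper disposes of both steps by citation to Devroye et al.; you additionally make explicit the standard plug-in excess-risk bound $0 \le \mh{L}_s - L_* \le 2\sum_{G,y}\lvert \hth_{G|y}\mh{\pi}_y - \theta_{G|y}\pi_y\rvert$, which is precisely the content of the cited result, so your argument is a self-contained version of the same route rather than a different one.
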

\begin{proof}
Because ${\mc{G}}_n$ and $\mc{Y}$ are both finite, the maximum likelihood estimates for the categorical parameters $\{\theta_y,\pi_y\}_{y \in \mc{Y}}$ are guaranteed to exist and be unique \cite{Devroye1996}.  Hence, the labeled graph Bayes plugin classifier is universally consistent to ${L}_*$ (that is, it converges to ${L}_*$ regardless of the true joint distribution, $\PP_{\QQ, \GG ,Y}$) \cite{Devroye1996}. 
\end{proof}

Similarly, let $\mh{L}_s'=L(\mh{h}_s')$ be the risk of the induced \emph{shuffled} graph Bayes plugin classifier using the training data $\mc{T}_s'$ to obtain maximum likelihood estimators for $\{\theta_y',\pi_y\}_{y \in \mc{Y}}$.  This yields
\begin{coro} \label{cor:Sh_Plug}
	$\mh{L}_s' \pconv L_*'$ as $s \conv \infty$.
\end{coro}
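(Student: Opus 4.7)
The plan is to reduce the corollary directly to Lemma~\ref{thm:4} by observing that the shuffled graph classification problem has exactly the same formal structure as the labeled one. Under the shuffling model, the training examples $(\GG_i', Y_i)$ are i.i.d.\ draws from the joint distribution $\PP_{\QQ(\GG),Y}$ on $\mc{G}_n \times \mc{Y}$, which is itself a discrete distribution on a finite product space (since $|\mc{G}_n| = d_n < \infty$ and $|\mc{Y}| < \infty$). The plug-in classifier $\mh{h}_s'$ in Eq.~\eqref{eq:Sh_Plug} is formally identical to $\mh{h}_s$ in Eq.~\eqref{eq:La_Plug}, except with the parameters $\theta_y'$ in place of $\theta_y$.

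First I would verify existence and uniqueness of the MLEs $\hth_{G|y}'$ and $\mh{\pi}_y$. Since both $\PP_{\QQ(\GG)|y}$ and $\pi_y$ are categorical distributions over finite supports, the same argument invoked in the proof of Lemma~\ref{thm:4} (citing \cite{Devroye1996}) yields existence and uniqueness. Next I would apply the standard consistency of categorical MLEs to conclude $\hth_{G|y}' \pconv \theta_{G|y}'$ entrywise and $\mh{\pi}_y \pconv \pi_y$. Because the argmax in Eq.~\eqref{eq:Sh_Plug} is over the finite set $\mc{Y}$ and (ignoring ties) is a continuous function of the estimated parameters, a union bound over the finitely many graphs $G \in \mc{G}_n$ combined with the continuous mapping theorem then delivers $\mh{L}_s' \pconv L'(h_*') = L_*'$.

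The main obstacle, such as it is, is simply to articulate cleanly that the parametric plug-in consistency argument used for Lemma~\ref{thm:4} depends only on (i) the finiteness of the graph and class spaces, and (ii) the availability of an i.i.d.\ training sample drawn from the true generative joint distribution. Both conditions continue to hold verbatim under the shuffle channel: because $\PP_{\QQ}$ was assumed to be class- and graph-independent, the sample $\mc{T}_s'$ is indeed i.i.d.\ from $\PP_{\QQ(\GG),Y}$. Once this observation is made, the corollary follows by applying Lemma~\ref{thm:4} to this induced joint distribution, and no new technical content is required.
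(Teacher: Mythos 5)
Your proposal is correct and follows essentially the same route as the paper: the paper's own proof simply observes that the argument for Lemma~\ref{thm:4} rests only on the finitude of $\mc{G}_n$, which is unaffected by shuffling, so the labeled-case proof carries over with $L_*$ replaced by $L_*'$. Your additional remarks on the i.i.d.\ structure of $\mc{T}_s'$ under the class- and graph-independent shuffle channel make the reduction slightly more explicit but introduce no new content.
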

\begin{proof}
	The previous proof rests on the finitude of $\mc{G}_n$, which remains finite after shuffling (uniform or otherwise), and therefore, the previous proof holds, replacing $L_*$ with $L_*'$.
\end{proof}

Thus while one could merely plug the shuffled graphs into $\theta_y'$, such a procedure is inadvisable.  Specifically, the above procedure does not  use the fact that all $\theta_{G_i'|y} = \theta_{G_j'|y}$ whenever $Q(G_i)=G_j$ for some $Q \in \mc{Q}$.  Instead, consider the risk $\mh{\mt{L}}_s=L(\mh{\mt{h}}_s)$ of the induced \emph{unlabeled} graph Bayes plugin classifier upon using the $\psi$ function to map each shuffled graph to its corresponding unlabeled graph, and then obtaining maximum likelihood estimates of the unlabeled graph parameters, $\mt{\theta}$.  
\begin{coro} \label{cor:Un_Plug}
	$\mh{\mt{L}}_s \pconv \mt{L}_*$ as $s \conv \infty$.
\end{coro}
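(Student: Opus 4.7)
The plan is to mirror the proof of Lemma \ref{thm:4} (and its Corollary \ref{cor:Sh_Plug}), reducing the unlabeled case to the standard consistency of a Bayes plug-in classifier on a finite alphabet. The key preliminary observation is that although we only observe the shuffled training set $\mc{T}_s'$, the function $\psi$ is deterministic and, as noted in Section \ref{sec:gi}, a graph invariant, so $\psi(\GG_i') = \psi(\QQ_i(\GG_i)) = U(\GG_i) = \mt{\GG}_i$. Applying $\psi$ coordinate-wise to $\mc{T}_s'$ therefore yields $\{(\mt{\GG}_i, Y_i)\}_{i \in [s]}$, which are iid samples from the joint distribution $\PP_{\mt{\GG}, Y}$.

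Given this reduction, the remaining steps are routine. First, by Definition \ref{def:shuffled} and the surrounding discussion, the unlabeled graph space $\mt{\mc{G}}_n$ is finite with cardinality $\mt{d}_n$, so $\PP_{\mt{\GG}, Y}$ is a discrete distribution on a finite alphabet parameterized by $\{\mt{\theta}_y, \pi_y\}_{y \in \mc{Y}}$ lying in the product of the finite-dimensional simplices $\triangle_{\mt{d}_n}$ and $\triangle_{|\mc{Y}|}$. Second, the maximum likelihood estimators of these parameters are the empirical frequencies of the $(\mt{\GG}_i, Y_i)$ pairs, which exist, are unique, and converge almost surely (hence in probability) to the true parameters by the strong law of large numbers applied to each categorical cell. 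Third, the argmax in Eq. \eqref{eq:Un_Plug} is continuous in its arguments at every point where the Bayes rule is uniquely defined, so the standard plug-in consistency result invoked in the proof of Lemma \ref{thm:4} (see \cite{Devroye1996}) transfers verbatim, yielding $\mh{\mt{L}}_s \pconv \mt{L}_*$.

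The only real subtlety, and the step most worth writing carefully, is the first observation: one must verify that running the data through $\psi$ before estimation really does produce iid draws from $\PP_{\mt{\GG}, Y}$ rather than from some channel-distorted surrogate. The point is that each $\QQ_i$ never leaves the isomorphism class of $\GG_i$, so $\psi$ collapses precisely the randomness introduced by the channel $\PP_{\QQ}$ while preserving all of the class-conditional information carried by $\mt{\theta}_y$. This is exactly the additional constraint that Corollary \ref{cor:Sh_Plug} fails to exploit, and it is what makes $\mh{\mt{h}}_s$ converge to the tighter limit $\mt{L}_*$ (which equals $L_*'$ by Lemma \ref{thm:1}) through the finite-alphabet empirical estimator on $\mt{\mc{G}}_n$ rather than on the much larger set $\mc{G}_n$. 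Once the iid reduction is in place, the rest of the argument is essentially identical to the proofs of Lemma \ref{thm:4} and Corollary \ref{cor:Sh_Plug}, with $\mc{G}_n$ and $\theta_y$ replaced by $\mt{\mc{G}}_n$ and $\mt{\theta}_y$.
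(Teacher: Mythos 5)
Your proposal is correct and follows the same route the paper intends: the paper omits an explicit proof of Corollary \ref{cor:Un_Plug}, treating it as immediate from the finitude argument of Lemma \ref{thm:4} and Corollary \ref{cor:Sh_Plug}, and your argument is exactly that reduction, with the useful added care of checking that applying the invariant $\psi$ to the shuffled training pairs yields iid draws from $\PP_{\mt{\GG},Y}$ on the finite alphabet $\mt{\mc{G}}_n$.
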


Because $|\mt{\mc{G}}_n| \ll |\mc{G}_n|$ (by a factor of approximately $n!$), it follows that classifying by first projecting the graphs into a lower dimensional space should yield improved performance.  Specifically, we have the following result:

\begin{lem} \label{thm:tdomp}
	$\mh{\mt{h}}_s$ dominates $\mh{h}_s'$ for \emph{shuffled} graph data. 	
\end{lem}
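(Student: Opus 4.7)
The plan is to exploit the fact that under shuffled data with uniform $\PP_{\QQ}$, the isomorphism-class function $\psi$ is a sufficient statistic for the class-conditional parameters, so the unlabeled plug-in classifier is effectively a Rao-Blackwellized version of the shuffled plug-in. Since Corollaries \ref{cor:Sh_Plug} and \ref{cor:Un_Plug} together with Lemma \ref{thm:1} imply that both classifiers have the same asymptotic risk $L_*' = \mt{L}_*$, dominance must be argued in finite samples, so I would compare the two classifiers directly for each training sample size $s$.

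First I would write out the two MLEs in terms of training counts $n_{G,y} = \sum_{i=1}^s \mathbf{1}\{\GG_i' = G, Y_i = y\}$ and $n_{\mt{G},y} = \sum_{G \in \mt{G}} n_{G,y}$, obtaining $\mh{\theta}_{G|y}' = n_{G,y}/s_y$ and $\mh{\mt{\theta}}_{\mt{G}|y} = n_{\mt{G},y}/s_y$. I would then express the expected shuffled risk of either plug-in $\mh{h}$ as
\[
\EE[L'(\mh{h})] = \sum_{\mt{G}}\sum_y \pi_y \sum_{G \in \mt{G}} \theta_{G|y}' \Pr[\mh{h}(G) \neq y],
\]
and substitute $\theta_{G|y}' = \mt{\theta}_{\mt{G}|y}/|\mt{G}|$, which holds because $\PP_{\QQ}$ is uniform (see Section \ref{sub:a_shuffled_graph_classification_model}).

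The key observation is that $\mh{\mt{h}}_s(G)$ depends on $G$ only through $\psi(G)=\mt{G}$, so the contribution of class $\mt{G}$ to its expected risk is $\sum_y \pi_y \mt{\theta}_{\mt{G}|y} \Pr[\mh{\mt{h}}_s(\mt{G}) \neq y]$, while for $\mh{h}_s'$ the corresponding contribution is the $G$-average $|\mt{G}|^{-1}\sum_{G\in\mt{G}}\sum_y \pi_y \mt{\theta}_{\mt{G}|y} \Pr[\mh{h}_s'(G) \neq y]$. Dominance then reduces to the per-isomorphism-class inequality that the pooled-count argmax errs no more than the average of the per-graph argmaxes. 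I would establish this by noting that $(n_{G,y})_{G\in \mt{G}}$ is exchangeable and, conditional on $n_{\mt{G},y}$, is Multinomial$(n_{\mt{G},y},1/|\mt{G}|,\ldots,1/|\mt{G}|)$; hence the per-graph estimator equals the pooled estimator plus mean-zero within-class Multinomial noise that is independent of the data-generating parameters.

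The main obstacle I anticipate is making this Rao-Blackwell step rigorous for $0$--$1$ loss rather than for a convex loss. My approach would be to work conditionally on the sufficient statistic $(n_{\mt{G},\cdot})_{\mt{G}}$: given this statistic, $\mh{\mt{h}}_s$ is the Bayes-optimal plug-in among all rules measurable with respect to it, while the per-graph rule $\mh{h}_s'(G)$ is an almost-surely randomized rule in the same $\sigma$-field after marginalizing out the within-class allocation. A direct case analysis (with any consistent tie-breaking rule) then gives $\Pr[\mh{\mt{h}}_s(\mt{G})\neq y \mid n_{\mt{G},\cdot}] \leq |\mt{G}|^{-1}\sum_{G\in\mt{G}}\Pr[\mh{h}_s'(G)\neq y \mid n_{\mt{G},\cdot}]$, and summing over $\mt{G}$ and $y$ yields the dominance. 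Strict dominance can be identified with the positive-probability event that the within-class allocation flips the per-graph argmax away from the pooled argmax for at least one $G\in\mt{G}$.
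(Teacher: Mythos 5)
Your overall route is in fact the same as the paper's: both arguments reduce dominance to the observation that the unlabeled plug-in pools the training counts over an isomorphism class while the shuffled plug-in does not, so that the per-graph counts are a thinned version of the pooled counts (in your version, conditionally Multinomial$(n_{\mt{G},y},1/|\mt{G}|,\ldots,1/|\mt{G}|)$ given $n_{\mt{G},y}$). The paper phrases this as $s_{\mt{G}} \geq s_G$ giving the pooled Binomial estimator ``tighter concentration'' around the common truth $\mt{\theta}_{G|y}=\theta'_{G|y}$ and then asserts the result; you phrase it as a Rao--Blackwell step conditional on the sufficient statistic $\psi$. Your conditional-multinomial decomposition is the cleaner formalization of ``borrowing strength.''

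However, the specific inequality you rely on to close the argument is false as stated. You claim that for every $\mt{G}$, every $y$, and every realization of the pooled counts $n_{\mt{G},\cdot}$,
\begin{align*}
\Pr[\mh{\mt{h}}_s(\mt{G})\neq y \mid n_{\mt{G},\cdot}] \;\leq\; \frac{1}{|\mt{G}|}\sum_{G\in\mt{G}}\Pr[\mh{h}_s'(G)\neq y \mid n_{\mt{G},\cdot}].
\end{align*}
Take $y$ to be any class \emph{other than} the pooled argmax; then the left side equals $1$, while the right side is strictly less than $1$ whenever the within-class allocation has positive probability of flipping some per-graph argmax to $y$ (for instance, two classes, $|\mt{G}|=2$, pooled counts $(2,1)$: with positive probability one graph receives counts $(0,1)$ and selects the minority class). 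Summing over $y$ with the weights $\pi_y\mt{\theta}_{\mt{G}|y}$ does not rescue the pointwise claim either: conditional on an unlucky draw in which the pooled argmax is the wrong class, the within-class allocation noise can push individual graphs back toward the correct class, making the per-graph rule conditionally \emph{better}. A correct proof must therefore average over the distribution of $n_{\mt{G},\cdot}$ rather than argue conditionally on it --- which is exactly the step that Rao--Blackwell handles automatically for convex losses but not for $0$--$1$ loss, as you yourself anticipated. To be fair, the paper's own proof is informal at precisely this point (it compares each estimator's argmax to the \emph{true} argmax and invokes an unproven monotonicity-in-sample-size claim for Binomial concentration), so you have correctly located the crux of the lemma; but the proposed per-$(\mt{G},y)$ conditional inequality does not close it.
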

\begin{proof}
% Because $\htth_s$ can borrow strength from all the shuffled graphs within an unlabeled graph, its variance is smaller than the variance of $\hth_s'$.  
Consider the scalar $\mt{\theta}_{\mt{G}|y}$ decomposed into the vector $(\mt{\theta}_{G_1|y},\ldots,\mt{\theta}_{G_{|\mt{G}|}|y})$, where each $\mt{\theta}_{G_i|y}=\mt{\theta}_{\mt{G}|y}/|\mt{G}|$. Note that each $\mt{\theta}_{G_i|y} = \theta'_{G_i|y}$.  Yet, the estimators, $\mh{\mt{\theta}}_{G_i|y}$ and $\mh{\theta}'_{G_i|y}$ are not equal, because the former can borrow strength from all shuffled graphs within the same unlabeled graph, but the latter does not.  Assuming without loss of generality that the class priors are equal and known, the above domination claim is equivalent to stating that for each $G$,
\begin{align}
	\PP[\argmax_{y \in \mc{Y}} \mh{\mt{\theta}}_{G|y} \neq \argmax_{y \in \mc{Y}} \mt{\theta}_{G|y} | \mc{T}_s'] \leq %\\
	 \PP[\argmax_{y \in \mc{Y}} \mh{\theta}'_{G|y}  \neq \argmax_{y \in \mc{Y}} \theta'_{G|y} | \mc{T}_s'].
\end{align}
Because $\mt{\theta}_{G|y}=\theta'_{G|y}$, the only difference between the two sides of the above inequality is the estimators.  We know that the estimators have the following distributions:
\begin{subequations}
\begin{align}
	s_{\mt{G}} \mh{\mt{\theta}}_{G|y} \sim \text{Binomial}(\mt{\theta}_{G|y}, s_{\mt{G}}) \\
	s_G \mh{\theta}_{G|y} \sim \text{Binomial}(\mt{\theta}_{G|y}, s_{G}),
\end{align}
\end{subequations}
where $s_{\mt{G}}$ is the number of observations of any $G \in \mt{G}$ in the training data, and $s_G$ is the number of observations of $G$ in the training data.  From this, we see that for each $G$, $\mh{\mt{\theta}}_{G|y}$ will have a tighter concentration around the truth due to is borrowing strength, because $s_{\mt{G}} \geq s_G$, so our result holds. 
\end{proof}

\subsection{$k_s$ Nearest Neighbor Graph Classifiers are Universally Consistent} % (fold)
\label{sec:a_practical_approach_to_unlabeled_graph_classification}

Corollary \ref{cor:Un_Plug} demonstrates that one can induce a universally consistent classifier $\mh{\mt{h}}_s$  using Eq. \eqref{eq:Un_Plug}. Lemma \ref{thm:tdomp} further shows that the performance of $\mh{\mt{h}}_s$ dominates $\mh{h}_s'$.  Yet, using $\mh{\mt{h}}_s$ is practically useless for two reasons.  First, it requires solving $s$ graph isomorphism problems. Unfortunately, there are no algorithms for solving graph isomorphism problems with worst-case performance known to be in only polynomial time \cite{Fortin1996}. Second, the number of parameters to estimate is super-exponential in $n$ ($\mt{d}_n \approx 2^{n^2}/n!$), and acceptable performance will typically require $s \gg \mt{d}_n$.  We can therefore not even store the parameter estimates for small graphs (e.g., $n=30$), much less estimate them.  
% Specifically, using Eq. \eqref{eq:Sh_Plug} requires first enumerating  all $\mt{d}_n$ isomorphism sets, then determining in which isomorphism set the to-be-classified graph and each of the training graphs reside.  This approach is therefore, in general, impractical because the number of parameters to estimate, $\mt{d}_n$, is too large. 
This motivates consideration of an alternative strategy.

A $k_s$ nearest-neighbor ($k$NN) classifier 
using Euclidean norm distance
is universally consistent to $L_*$ for vector-valued data 
% labeled graph classification 
as long as $k_s \conv \infty$ with $k_s/s \conv 0$ as $s \conv \infty$ \cite{Stone1977}. This non-parametric approach circumvents the need to estimate many parameters in high-dimensional settings such as graph-classification. The universal consistency proof for $k_s$NN was extended to graph-valued data in reference \cite{VP11_super}, which we include here for completeness.   Specifically, to compare labeled graphs, reference \cite{VP11_super} considered a Frobenius norm distance
\begin{align}
	\delta(G_i,G_j)=\norm{A_i-A_j}_F^2,
\end{align}
where $A_i$ is the adjacency matrix representation of the labeled graph, $G_i$.  
Let $\mh{h}_s^\delta$ denote the Frobenius norm $k_s$NN classifier on \emph{labeled} graphs using $\delta$, and let $\mh{L}^{\delta}_s$ indicate the misclassification rate for this classifier.  Reference \cite{VP11_super} showed:
\begin{lem} \label{thm:5}
	$\mh{L}^{\delta}_s \pconv L_*$ as $s \conv \infty$.
\end{lem}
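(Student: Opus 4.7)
The plan is to reduce the claim to Stone's classical universal consistency theorem for $k$NN classifiers under Euclidean distance. The key observation is that labeled graphs on $n$ vertices, equipped with the Frobenius norm, can be embedded isometrically into a finite-dimensional Euclidean space.

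First, I would formalize the embedding by defining $\phi\colon \mc{G}_n \to \Real^{n^2}$ via $\phi(G) = \mathrm{vec}(A_G)$, where $A_G$ is the adjacency matrix of $G$. This map is injective (in fact, it identifies $\mc{G}_n$ with a finite subset of $\{0,1\}^{n^2}$), and by construction $\delta(G_i,G_j) = \|A_i - A_j\|_F^2 = \|\phi(G_i) - \phi(G_j)\|_2^2$. Consequently, the Frobenius-norm $k_s$NN classifier on $\mc{G}_n$ coincides, under this embedding, with the standard Euclidean $k_s$NN classifier on the pushforward distribution on $\Real^{n^2} \times \mc{Y}$. Since $\phi$ is injective, the Bayes risk on the Euclidean side equals $L_*$ (preprocessing by a bijection cannot alter classification risk).

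Second, I would invoke Stone's theorem \cite{Stone1977}: for any Borel distribution on $\Real^d \times \mc{Y}$ with $d$ finite, the $k_s$NN classifier under Euclidean distance satisfies $\EE[L(\mh{h}_s^{k\mathrm{NN}})] \to L_*$ whenever $k_s \to \infty$ and $k_s/s \to 0$. Applied to the pushforward distribution on $\Real^{n^2} \times \mc{Y}$, this yields $\mh{L}^\delta_s \to L_*$ in mean, hence also in probability, which is exactly the claim.

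The main subtlety, and likely the only point worth dwelling on, is tie-breaking. Because $\phi(\mc{G}_n)$ is a finite subset of $\{0,1\}^{n^2}$, interpoint distances take only finitely many values, so ties among nearest neighbors occur with positive probability — a situation not covered by the cleanest statement of Stone's theorem, which often assumes distances are a.s.\ distinct. I would resolve this in the standard way: either apply a symmetric randomized tie-breaking rule (e.g., break ties by independent uniform auxiliary variables), or appeal to the version of Stone's theorem (see \cite{Devroye1996}) that explicitly incorporates ties via a measurable, distribution-free tie-breaking scheme. In either case, universal consistency is preserved, and the chain \emph{embedding} $\to$ \emph{Stone's theorem} $\to$ \emph{tie-breaking remark} completes the proof.
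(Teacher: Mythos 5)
Your proof is correct, but it takes a different route from the paper's. The paper's own argument is a one-line direct exploitation of finiteness: since $\mc{G}_n$ and $\mc{Y}$ are finite, any test graph in the support has positive probability, so by the law of large numbers the number of \emph{exact copies} of the test graph in the training set grows linearly in $s$ while $k_s/s \to 0$; hence eventually all $k_s$ nearest neighbors are at Frobenius distance zero from the test graph, and the plurality vote over $k_s \to \infty$ labels drawn from $\PP_{Y|\GG=G}$ converges to $\argmax_y \theta_{G|y}\pi_y$, i.e., the Bayes decision. You instead embed $\mc{G}_n$ into $\Real^{n^2}$ via vectorized adjacency matrices (noting that the squared Frobenius norm induces the same neighbor ordering as the Euclidean metric) and invoke Stone's theorem for the pushforward distribution. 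Both are valid. The paper's argument is more elementary and sidesteps Stone's theorem entirely, but it quietly ignores the tie-breaking question (harmless there, since ties occur only among identical copies of the test graph, which carry identical conditional label distributions). Your route is heavier machinery for this finite setting, but it is more careful about ties --- a genuine issue for purely atomic distributions, correctly resolved by the randomized tie-breaking version of Stone's theorem in \cite{Devroye1996} --- and it would extend without change to settings where the graph space is not finite (e.g., real-weighted edges), where the paper's exact-copy argument breaks down. One cosmetic point: the map $\phi$ is not an isometry onto its image (it preserves \emph{squared} distances), but since $k$NN depends only on the ordering of distances this does not affect the argument.
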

\begin{proof}
Because both $\mc{G}$ and $\mc{Y}$ have finite cardinality, the law of large numbers ensures that eventually as $s \conv \infty$, the plurality of nearest neighbors to a test graph will be identical to the test graph. 
\end{proof}
Let $\mh{h}_s^{'\delta}$ denote the Frobenius norm $k_s$NN classifier on \emph{shuffled} graphs using $\delta'$, and let $\mh{L}^{'\delta'}_s$ indicate the misclassification rate for this classifier.   %Similarly, let ${\mh\mt{h}}^{\delta}_s$ and $\mh{\mt{L}}^{\delta}_s$ be the Frobenius-norm $k_s$NN on \emph{unlabeled} graphs using
From the above lemma and Corollary \ref{cor:Sh_Plug},  the below follows immediately:
\begin{coro} \label{cor:knn1}
	$\mh{L}^{'\delta}_s \pconv L_*'$ as $s \conv \infty$.
\end{coro}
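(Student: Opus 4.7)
The plan is to observe that the Frobenius norm $k_s$NN classifier on shuffled graphs is structurally identical to the Frobenius norm $k_s$NN classifier on labeled graphs: both act on data taking values in the finite set $\mc{G}_n$, and differ only in the generating distribution ($\PP_{\QQ(\GG),Y}$ in place of $\PP_{\GG,Y}$). Consequently the proof of Lemma \ref{thm:5} transfers essentially verbatim, with $L_*$ replaced by $L_*'$ and $\theta_{G|y}$ replaced by $\theta_{G|y}'$. The task therefore reduces to checking that each ingredient of the Lemma \ref{thm:5} argument still applies after shuffling.

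First I would fix an arbitrary test graph $G \in \mc{G}_n$ and note that the training sample $\mc{T}_s' = \{(\GG_i',Y_i)\}_{i\in[s]}$ consists of i.i.d.\ draws from $\PP_{\QQ(\GG),Y}$, with $\PP[\GG_i' = G \mid Y_i = y] = \theta_{G|y}'$. Since $\mc{G}_n$ is finite, the strong law of large numbers guarantees that the count of training graphs exactly equal to $G$ grows almost surely like $s \cdot \sum_y \theta_{G|y}'\pi_y$, whenever that sum is positive.

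Second, because $k_s/s \to 0$, with probability tending to one the $k_s$ Frobenius-nearest neighbors of $G$ in $\mc{T}_s'$ are all exact copies of $G$ (Frobenius distance zero). The plurality class label returned by $\mh{h}_s^{'\delta}(G)$ is then a plurality over $k_s$ independent draws from the posterior $\PP_{Y \mid \GG'=G}$, whose masses are proportional to $\theta_{G|y}'\pi_y$. Since $k_s \to \infty$, the law of large numbers gives
\begin{equation*}
\PP\bigl[\,\mh{h}_s^{'\delta}(G) = \argmax_{y\in\mc{Y}} \theta_{G|y}'\pi_y\,\bigr] \;=\; \PP[\mh{h}_s^{'\delta}(G) = h_*'(G)] \;\longrightarrow\; 1.
\end{equation*}

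Third, the risk is a finite sum over $G \in \mc{G}_n$ of terms depending on $\mh{h}_s^{'\delta}(G)$ through the indicator of agreement with $h_*'(G)$, weighted by the marginal mass on $G$. Pointwise convergence in probability on each of the finitely many $G$ therefore upgrades to $\mh{L}_s^{'\delta} \pconv L_*'$. The main thing to verify is that finiteness of $\mc{G}_n$ — the sole hypothesis driving Lemma \ref{thm:5} — survives the shuffle channel; it does, because $\QQ\colon \mc{G}_n \to \mc{G}_n$ is a bijection on a finite set. No non-routine obstacle arises.
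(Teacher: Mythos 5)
Your proposal is correct and matches the paper's route: the paper derives this corollary by noting it "follows immediately" from Lemma \ref{thm:5}, whose proof is exactly the finite-cardinality law-of-large-numbers argument you spell out, transferred to the shuffled distribution $\PP_{\QQ(\GG),Y}$ with $L_*$ replaced by $L_*'$. You simply make explicit the details (exact-copy nearest neighbors, plurality vote converging to $h_*'$, finite sum over $\mc{G}_n$) that the paper leaves implicit.
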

Given shuffled graph data $\mc{T}_s'$, however, other distance metrics appear more ``natural'' to us.  For example,  
% As mentioned above, the number of unlabeled graphs is vastly less than the number of labeled or shuffled graphs, $|\mt{\mc{G}}_n| \approx |\mc{G}_n|/n!$.  Therefore, given that we observed only labeled or shuffled graphs, but not unlabeled graphs, we 
consider the ``graph-matched Frobenius norm'' distance:
\begin{align} \label{eq:d}
\delta'(G_i',G_j')=\min_{Q \in \mc{Q}_n}\norm{Q(A_i')-A_j'}_F^2,	
\end{align}
where $A_i'$ and $A_j'$ are shuffled adjacency matrices.  
Let $\mh{h}^{'\delta'}_s$ indicate the misclassification rate of the $k_s$NN classifier using the above graph-matched norm $\delta'$ \emph{shuffled} graphs, and let $\mh{L}^{'\delta'}_s$ indicate the misclassification rate for this classifier.  Given an exact graph matching function---a function that actually solves Eq. \eqref{eq:d}---we have the following result:
% we can use this as the distance in a $k_s$NN algorithm, referred to as the GM-$k_s$NN. 
% The GM-$k_s$NN classifier for shuffled graphs proceeds as follows.  First, compute the graph-matched Frobenius norm distance between the test graph $G$ and all training graphs, $\{\delta_i=\delta(G,G_i)\}_{i \in [s]}$.  Second, order the graph/class pairs according to their distances, $\delta_{(1)} \leq \cdots \leq \delta_{(s)}$.  Finally, let the estimated class be the plurality class of the $k_s$ closest graphs; that is, $\mh{y}=\argmax_{y \in \mc{Y}} \sum_{i \in [k_s]} \II\{y_{(i)}=y\}$.   %Collectively, we refer to this classification strategy as \emph{graph-matched} $k_s$NN (GM-$k_s$NN).
\begin{coro} \label{cor:Sh_knn}
	$\mh{L}^{'\delta'}_s \pconv L_*'$ as $s \conv \infty$.
\end{coro}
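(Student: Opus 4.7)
The plan is to reduce the $k_s$NN classifier based on $\delta'$ to a $k_s$NN classifier on unlabeled graphs, and then apply the same argument as in Lemma \ref{thm:5}. The key observation is that $\delta'$ is an isomorphism-invariant pseudometric: $\delta'(G_i',G_j')=0$ if and only if there exists $Q\in\mc{Q}_n$ with $Q(A_i')=A_j'$, i.e., if and only if $\psi(G_i')=\psi(G_j')$. Moreover, if $\psi(G_i')\neq\psi(G_j')$ then $\delta'(G_i',G_j')\geq 1$ because the adjacency matrices of non-isomorphic simple graphs differ in at least one entry under every permutation. So $\delta'$ descends to a genuine metric on $\mt{\mc{G}}_n$ that separates distinct unlabeled graphs by a strictly positive margin.

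Given that reduction, I would argue as follows. Fix a test shuffled graph $G'$ with $\mt{G}=\psi(G')$, and condition on the class prior weighted probability $p_{\mt{G}}=\sum_y \mt{\theta}_{\mt{G}|y}\pi_y$. Since $\mt{\mc{G}}_n$ is finite and $p_{\mt{G}}>0$ whenever $\mt{G}$ is observable, the law of large numbers implies that the number of training points $(\GG_i',Y_i)\in\mc{T}_s'$ with $\psi(\GG_i')=\mt{G}$ is approximately $sp_{\mt{G}}$ and hence goes to infinity. Because $k_s/s\to 0$, for all sufficiently large $s$ we have $k_s<sp_{\mt{G}}/2$, so with probability tending to $1$ all $k_s$ nearest neighbors of $G'$ under $\delta'$ lie in the isomorphism class $\mt{G}$ (they attain distance $0$, which is strictly smaller than the distance to any non-isomorphic graph).

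Conditioned on lying in $\mt{G}$, the class labels of these $k_s$ neighbors are i.i.d.\ draws from the posterior $\PP_{Y\mid\mt{\GG}=\mt{G}}$, which is exactly the posterior governing the unlabeled Bayes classifier $\mt{h}_*$. Since $k_s\to\infty$, a standard Hoeffding/Chernoff argument gives that the plurality vote converges in probability to $\argmax_{y\in\mc{Y}}\mt{\theta}_{\mt{G}|y}\pi_y=\mt{h}_*(\mt{G})$. Taking expectation over $G'$ and invoking Lemma \ref{thm:1} ($\mt{L}_*=L_*'$) and Lemma \ref{thm:3} (which identifies the unlabeled Bayes risk with the invariant Bayes risk achievable from shuffled data), we conclude $\mh{L}^{'\delta'}_s\pconv\mt{L}_*=L_*'$.

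The main obstacle is largely conceptual rather than technical: one has to justify that, despite $\delta'$ being defined as a minimum over the combinatorial object $\mc{Q}_n$ (whose exact evaluation is itself computationally hard), its \emph{value} is still a well-defined quantity that equals zero precisely on isomorphism classes. Once this is in hand, the finiteness of $\mt{\mc{G}}_n$ makes the rest essentially a replay of Lemma \ref{thm:5}. A minor subtlety to check is the tie-breaking when multiple classes achieve the plurality; this can be handled either by randomization or by the standard observation that ties occur on a set of posterior measure zero among class-conditional parameters for which Bayes classification is unambiguous.
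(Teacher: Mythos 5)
Your proposal is correct and follows essentially the same route the paper intends: the paper offers no explicit proof of this corollary, relying on the one-line finite-cardinality argument of Lemma \ref{thm:5}, and your write-up simply makes that argument precise by observing that $\delta'$ vanishes exactly on isomorphism classes (with a positive margin otherwise, since $\mc{Q}_n$ and $\mc{G}_n$ are finite), so the $k_s$NN under $\delta'$ reduces to a $k_s$NN on $\mt{\mc{G}}_n$ whose risk converges to $\mt{L}_*=L_*'$. The added care about the margin, the count $s p_{\mt{G}}\gg k_s$, and tie-breaking is a genuine improvement in rigor but not a different approach.
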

Thus, given shuffled data $\mc{T}_s'$, one could consider either $\mh{h}_s^{\delta}$ or $\mh{h}_s^{'\delta'}$.  

Interestingly, when the data are labeled graphs, $\mc{T}_s$, one can outperform $\mh{h}_s^{\delta}$ by \emph{shuffling}, that is, by apparently destroying the label information.  Consider an example in which $\theta=\theta'$, such that no information is in the labels.  In such scenarios, shuffling can effectively borrow strength from different labeled graphs that are within the same unlabeled graph set. Let $\mh{h}^{\delta'}_s$ indicate the misclassification rate of the $k_s$NN classifier using $\delta'$ \emph{labeled} graphs, and let $\mh{L}^{\delta'}_s$ indicate the misclassification rate for this classifier. We therefore state without proof:
\begin{lem} \label{thm:nodom}
	Neither $\mh{h}_s^{\delta'}$ nor $\mh{h}_s^{\delta}$ dominates when data are \emph{labeled} graphs.
	% $\EE[\mh{L}^{\delta'}_s] \geq \EE[\mh{L}^{\mt{\delta}}_s], \EE[\mh{L}^{\delta}_s]$ for all $s$. 
\end{lem}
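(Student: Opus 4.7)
The plan is to prove the non-dominance claim by exhibiting two different joint distributions $\PP_{\GG,Y}$ on labeled graphs: one under which $\mh{h}_s^{\delta'}$ has strictly lower risk than $\mh{h}_s^{\delta}$, and one under which $\mh{h}_s^{\delta}$ has strictly lower risk than $\mh{h}_s^{\delta'}$. Since a dominance claim requires the inequality to hold for \emph{every} distribution, producing counterexamples in both directions suffices.

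For the direction $\mh{h}_s^{\delta'}$ beats $\mh{h}_s^{\delta}$, I would follow the hint in the paragraph preceding the lemma and choose $\PP_{\GG,Y}$ so that each class-conditional is permutation-invariant, $\theta_{G|y} = \theta_{\Qs(G)|y}$ for all $\Qs \in \mc{Q}_n$ (for instance, $\theta_{G|y}$ depending only on edge count, with different edge-count distributions under $y=0$ and $y=1$). Then $\theta = \theta'$, so $L_* = L_*'$, and the labels carry no extra class-conditional information. At finite $s$, however, the graph-matched $k_s$NN effectively counts as neighbors of a test graph $G$ every training graph in the isomorphism set $[\mt{G}]$, whereas the Frobenius $k_s$NN only counts those training graphs that happen to share a vertex alignment with $G$. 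This is the same borrowing-of-strength phenomenon formalized in Lemma \ref{thm:tdomp}, and a concentration argument (or a direct calculation for a small $n$, small $s$ example) shows that the finite-sample risk of $\mh{h}_s^{\delta'}$ is strictly smaller.

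For the direction $\mh{h}_s^{\delta}$ beats $\mh{h}_s^{\delta'}$, I would construct a minimal example in which two isomorphic but distinct labeled graphs are assigned opposite labels deterministically. With $n=2$ and directed edges, take $G_1$ to have the single edge $1 \to 2$ and $G_2 = \Qs(G_1)$ the single edge $2 \to 1$, and set $\theta_{G_1 \mid 0} = \theta_{G_2 \mid 1} = 1$ with $\pi_0 = \pi_1 = 1/2$. Then $L_* = 0$, and by Lemma \ref{thm:5} the Frobenius classifier satisfies $\mh{L}_s^{\delta} \pconv 0$. Under the graph-matched distance, however, $\delta'(G_1,G_2) = 0$, so $G_1$ and $G_2$ are indistinguishable to the $k_s$NN with metric $\delta'$; the latter's risk converges to $L_*' = 1/2 > 0$. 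For sufficiently large $s$, this gives a strict inequality in the opposite direction and completes the argument.

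The main obstacle is the first direction, because both classifiers are universally consistent to the \emph{same} limit $L_*$ under that distribution, so the required strict inequality is purely a finite-sample statement. Making it rigorous requires pinning down a specific $(n,s,k_s)$ and carefully comparing the two binomial-type counts of same-class neighbors — essentially the same estimator-concentration comparison used to prove Lemma \ref{thm:tdomp}, but now embedded inside a $k_s$NN vote rather than a plug-in argmax. The second direction is comparatively easy because the two limiting risks $L_*$ and $L_*'$ already differ, so the separation is asymptotic and follows immediately from Lemma \ref{thm:5} and Corollary \ref{cor:Sh_knn}.
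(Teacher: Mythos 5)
The paper offers no proof to compare against: Lemma~\ref{thm:nodom} is explicitly ``stated without proof,'' with only the informal remark that when $\theta=\theta'$ shuffling can borrow strength across an isomorphism set. Your proposal is therefore more than the paper provides, and its overall strategy --- exhibit one distribution where each classifier strictly beats the other --- is exactly the right way to refute a dominance claim. Your second direction is complete and correct: with two isomorphic graphs assigned opposite classes deterministically, $L_*=0$ while any classifier built on $\delta'$ is blind to the distinction (since $\delta'(G_1,G_2)=0$) and so cannot do better than $L_*'=1/2$; Lemma~\ref{thm:5} then separates the two risks for all large $s$. One cosmetic point: your $n=2$ directed-edge example is fine under the paper's general definition ($|\mc{E}|\le n^2$), but since the paper's running convention is simple undirected graphs you might prefer an $n=3$ undirected version (e.g.\ put class $0$ on one single-edge graph and class $1$ on an isomorphic one) to avoid any quibble.

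The genuine gap is in your first direction, and you have correctly identified it yourself. Under a permutation-invariant $\theta$, both $\mh{h}_s^{\delta}$ and $\mh{h}_s^{\delta'}$ are consistent to the \emph{same} limit $L_*=L_*'$, so the required strict inequality $\mh{L}_s^{\delta'}<\mh{L}_s^{\delta}$ is a purely finite-sample claim, and the appeal to Lemma~\ref{thm:tdomp} does not transfer automatically: that lemma compares plug-in \emph{estimators} of $\theta$ versus $\mt{\theta}$, whereas here the borrowing of strength is mediated by a $k_s$NN vote, where ties, the choice of $k_s$, and the geometry of the Frobenius metric all intervene (e.g.\ with $s=1$ and $k_1=1$ both classifiers output the single training label and have identical risk, so the example must be chosen with some care). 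To close the argument you would need to fix a concrete $(n,s,k_s)$ and distribution and compute, or at least bound, the two expected risks --- for instance $n=3$, class-conditionals supported on distinct edge-count isomorphism classes, and $s$ large enough that the probability of the nearest Frobenius neighbor lying in the wrong class is demonstrably larger than the corresponding probability under the graph-matched metric. This is routine but not free, and as written the first half of the proof is a plausible plan rather than a proof.
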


Thus, when the training data consists of shuffled graphs, the best universally consistent classifier (of those considered herein) is a $k_s$NN that uses $\delta'$ as the distance metric.  Other universally consistent classifiers that we considered either require estimating more parameters than there are molecules in the universe, or are inadmissible under $0-1$ loss.  When vertex labels are available, no classifier dominates.

\subsection{Comparing Asymptotic Performances} % (fold)
\label{sub:asymptotically_optimal_classifiers}

% subsection asymptotically_optimal_classifiers (end)

The above theoretical results consider Bayes plug-in and $k_s$NN classifiers. Here we consider other classifiers.  Specifically,  let $\mh{L}^{\psi}_s$ be the misclassification rate for some classifier that operates on $\mc{T}_s'$, that is, only has access to shuffled graphs.  Consider the set of seven graph invariants studied in \cite{PCP10}: 
size, max degree, max eigenvalue, scan statistic, number of triangles, and average path length. % 
% \begin{itemize}
% 	\item \textbf{size:} number of edges in the graph
% 	\item \textbf{maximum degree:} maximum number of edges incident to a vertex
% 	\item \textbf{maximum eigenvalue:} maximum eigenvalue of adjacency matrix, an upper bound on the maximum average degree
% 	\item \textbf{scan statistic:} maximum size of the neighborhood of a vertex
% 	\item \textbf{number of triangles:} a triangle is a collection of three vertices that are connected
% 	\item \textbf{clustering coefficient:} a measure of connectedness of the graph
% 	\item \textbf{average path length:} average number of edges one must traverse to travel from a vertex to any other vertex.
% \end{itemize}
Via Monte Carlo, \cite{PCP10} was unable to find a uniformly most powerful graph invariant (test statistic \cite{priebe2010you}) for a particular hypothesis testing scenario with unlabeled graphs.  The above results, however, indicate that there exists optimal classifiers (or test statistics) for any unlabeled or shuffled graph setting.  %In particular, let $\mh{L}^{\psi}_s$ be the misclassification rate for any classifier induced from $\mc{T}_s'$, that is, only observing shuffled graphs.  
To proceed, let $\mh{h}^{\mh{\pi}}_s$ be the \emph{chance} classifier, that is
\begin{align}
	\mh{h}_s^{\mh{\pi}}(G)=\argmax_{y\in\mc{Y}} \mh{\pi}_y,
\end{align}
and let $\mh{L}^{\mh{\pi}}_s$ be the misclassification rate for this classifier. Moreover, let $\mh{L}_s^\psi$ be the risk of the invariant classifier that is equivalent to the unlabeled Bayes plug-in classifier (see Lemma \ref{thm:3}). From the above results, it follows that:
\begin{lem} \label{thm:order}
	In expectation, \\ $\mh{L}^{\mh{\pi}}_s \geq  \mh{L}_s'=\mh{\mt{L}}_s = \mh{L}^{\psi}_s = \mh{L}^{\delta'}_s=\mh{L}^{'\delta'}_s$ as $s \conv \infty$. 
\end{lem}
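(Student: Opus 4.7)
The plan is to establish each displayed quantity converges to (or is asymptotically bounded by) a single common value, namely $L_*'$, and then chain them via the identity $\mt{L}_* = L_*'$ from Lemma \ref{thm:1}. Because all risks are bounded in $[0,1]$, convergence in probability upgrades to convergence in expectation via the bounded convergence theorem, which is what will license the ``in expectation'' language.

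First I would handle the five quantities alleged to be equal. Corollary \ref{cor:Sh_Plug} gives $\mh{L}_s' \pconv L_*'$ immediately. Corollary \ref{cor:Un_Plug} combined with Lemma \ref{thm:1} gives $\mh{\mt{L}}_s \pconv \mt{L}_* = L_*'$. For $\mh{L}^{\psi}_s$, I use the specific invariant $\psi$ constructed in the proof of Lemma \ref{thm:3} (the isomorphism class map): the plug-in classifier on $\psi(\GG)$ is literally the unlabeled Bayes plug-in, so it converges to $L_*^\psi = \mt{L}_* = L_*'$. For $\mh{L}^{'\delta'}_s$ (the graph-matched $k_s$NN on shuffled data), Corollary \ref{cor:Sh_knn} gives $\mh{L}^{'\delta'}_s \pconv L_*'$ directly.

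The nontrivial member of the chain is $\mh{L}^{\delta'}_s$, the $k_s$NN classifier applied to \emph{labeled} data but using the invariant distance $\delta'$. The key observation is that $\delta'$ depends on its arguments only through their isomorphism classes: for any $Q_i,Q_j \in \mc{Q}_n$, $\delta'(Q_i(G_i),Q_j(G_j)) = \delta'(G_i,G_j)$. Consequently the classifier's output is measurable with respect to $\psi(\GG)$, so it effectively operates on the pushforward distribution onto $\mt{\mc{G}}_n$ which has class-conditionals $\mt{\theta}_y$. By the same finite-cardinality law-of-large-numbers argument used in Lemma \ref{thm:5} (now at the level of unlabeled graphs), the plurality of $k_s$ nearest neighbors under $\delta'$ will eventually match the test graph's isomorphism class, yielding $\mh{L}^{\delta'}_s \pconv \mt{L}_* = L_*'$.

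Finally, for the leftmost inequality, the law of large numbers gives $\mh{\pi}_y \pconv \pi_y$, hence $\mh{L}^{\mh{\pi}}_s \pconv 1 - \max_y \pi_y$. The constant classifier $h(G) = \argmax_y \pi_y$ is a valid shuffled classifier, so by optimality of $h_*'$, $L_*' \leq 1 - \max_y \pi_y$, giving $\lim \mh{L}^{\mh{\pi}}_s \geq \lim \mh{L}_s'$. The main obstacle I anticipate is the $\mh{L}^{\delta'}_s$ case: one must be careful to argue that using an invariant metric on labeled data genuinely reduces the problem to the unlabeled one (equivalently, that no additional information beyond the isomorphism class can leak through the $k_s$NN decision), and that the asymptotic limit is therefore $\mt{L}_*$ rather than $L_*$. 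Everything else follows by stitching together prior lemmas and corollaries.
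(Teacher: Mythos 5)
Your proposal is correct and follows essentially the same route the paper intends: the paper states Lemma \ref{thm:order} with no explicit proof beyond ``from the above results, it follows that,'' i.e., precisely the chaining of Corollaries \ref{cor:Sh_Plug}, \ref{cor:Un_Plug}, \ref{cor:Sh_knn} and Lemmas \ref{thm:1}, \ref{thm:3}, \ref{thm:5} that you carry out, with the common limit $L_*' = \mt{L}_*$ and the chance classifier bounded below by $1-\max_y \pi_y \geq L_*'$. Your added details---the permutation-invariance of $\delta'$ reducing the $\mh{L}^{\delta'}_s$ case to the unlabeled setting, and the bounded-convergence upgrade to expectation---are correct and in fact supply justification the paper leaves implicit.
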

% and
% \begin{coro}
% 	$\EE[\mh{L}^{\psi}_s] \geq \EE[\mh{\mt{L}}_s]$ for all $s$.
% \end{coro}

\subsection{Comparing Computational Properties} % (fold)
\label{sub:comparing_computational_properties}

While asymptotic results can be informative and insightful, understanding the computational properties of the different classifiers can be as (or even more) informative for real applications.  Table \ref{tab:comp} compares the space and time complexity of the various classifiers considered above.  Only the $k$NN classifiers have the property that they do not require more space than there are atoms in the universe (for any $n$ bigger than $\approx 30$). Of those, the labeled $k_s$NN classifier does not require time exponential in the number of vertices. Therefore, we only found one type of classifier with performance guarantees that has both polynomial space and time.  Unfortunately, the finite sample performance of this classifier is abysmal.  This motivates constructing approximate classifiers. %But this classifier will often have terrible convergence rates, as suggested by Lemma \ref{thm:knninadmiss}.   

\begin{table}[h!]
\caption{Order of computational properties for training the various shuffled graph classifiers.}
\begin{center}
\begin{tabular}{|r|c|c|c|}
\hline 
name & notation &  time  &  space \\ \hline % & testing time 
chance & $\mh{h}^{\mh{\pi}}_s$ & s & 1 \\ %& 1   \\
invariant & $\mh{h}^{\psi}_s$  & $e^n s$ & $\min(2^{\binom{n}{2}}/n!,s)$ \\ %& $e^n$ \\
labeled Bayes plug-in & $\mh{h}_s'$ &  $s$ & $\min(2^{\binom{n}{2}},s)$  \\ %& $\min(2^{\binom{n}{2}},s)$  \\
unlabeled Bayes plug-in & $\mht{h}_s $ & $e^n s$ & $\min(2^{\binom{n}{2}}/n!,s)$ \\ %&  $\min(2^{\binom{n}{2}}/n!,s)$\\
labeled $k_s$NN & $\mh{h}^{\delta'}_s$  & $n^2 s^2$ & $n^2 s$ \\ %% & $s$\\
unshuffled $k_s$NN & $\mh{h}^{'\delta'}_s$ & $ e^n s^2$ & $n^2 s$ \\ %& $ \min(s, 2^{\binom{n}{2}}/n!)$  \\
    \hline
\end{tabular}
\end{center}
\label{tab:comp}
\end{table}%

% subsection comparing_computational_properties (end)

\section{Real World Application} % (fold)
\label{sec:simulated_experiment}

We buttress the above theoretical results via numerical experiments.  The asymptotic results combined with the computational complexities of the above described algorithm suggest that none of the proposed algorithms have all the properties we effectively require for real world applications, in particular, polynomial space and time complexity, as well as reasonable convergence rates.  We therefore propose a different algorithm, which lacks universal consistency, but can be run on real data with good hope for reasonable performance.  In particular, we modify $\mh{h}_s^{'\delta'}$, the \emph{unshuffled} $k$NN classifier.  Instead of requiring this classifier to actually solve the graph matching problem, Eq. \eqref{eq:d}, we  use a recently proposed state-of-the-art approximate cubic time algorithm \cite{VP11_QAP}.  Denote this classifier $\mh{h}_s^{'\mt{\delta}}$.

\subsection{Shuffled Connectome Classification} % (fold)
\label{sub:connectome_classification}

% Emboldened by the simulated performance of our unlabeled graph classifier, we decided to try it on a real-world application.  
A ``connectome'' is a brain-graph in which vertices correspond to (groups of) neurons, and edges correspond to connections between them.  Diffusion Magnetic Resonance (MR) Imaging and related technologies are making the acquisition of MR connectomes routine \cite{Hagmann2010}.  49 subjects from the Baltimore Longitudinal Study on Aging comprise this data, with acquisition and connectome inference details as reported in \cite{MRCAP11}.  Each connectome yields a $70$ vertex simple graph (binary, symmetric, and hollow adjacency matrix).  Associated with each graph is class label based on the sex of the individual (24 males, 25 females).  Because the vertices are labeled, we can compare the results of having the labels and not having the labels.

Consider the following five classifiers:
% The above theoretical results suggest that, given a collection of graphs, a graph-matched Frobenius norm $k_s$NN classifier will achieve optimal performance, given sufficient data.  To investigate the convergence rates in practical terms, we conduct a couple numerical experiments comparing the performance of four classifiers:
\begin{itemize}
	\item {$\delta$-$1$NN:}  A $1$-nearest neighbor ($1$NN) with Frobenius norm distance on the \emph{labeled} adjacency matrices.
	\item {$\delta'$-$1$NN:} A $1$NN with Frobenius norm distance on the \emph{shuffled} adjacency matrices.
	\item {$\mt{\delta}$-$1$NN:} A $1$NN with an \emph{approximate} graph-matched Frobenius norm distance on the shuffled adjacency matrices, as described above.  Because graph-matching is $\mc{NP}$-hard \cite{Garey1979}, we instead use an inexact graph matching approach based on the quadratic assignment formulation described in \cite{VP11_QAP}, which only requires $\mc{O}(n^3)$ time.
	\item {$\psi$-$1$NN:} A $1$NN with Euclidean distance using the seven graph invariants described above.  Prior to computing the Euclidean distance, for each invariant, we rescale all the values to lie between zero and one.
	\item {$\mh{\pi}$:} Use the chance classifier defined above.
\end{itemize}
Performance is assessed by leave-one-out misclassification rate.

Figure \ref{fig:1} reifies the above theoretical results in a particular finite sample regime.  We apply the five algorithms discussed above to sub-samples of the connectome data, which shows approximate convergence rates for this data.  Fortunately, this real data example supports the main lemmas of this work.  Specifically, the $k_s$NN classifier using $\delta$ on the \emph{labeled} graphs (dashed gray line) achieves the lowest misclassification rate for all $s$, which one would expect if labels contain appropriate class signal.  
Moreover, 
% as suggested by Theorem \ref{thm:knninadmiss},
the $k_s$NN classifier using the inexact graph-matching Frobenius norm on the shuffled adjacency matrices, $\mt{\delta}$, performs best of all classifiers using only shuffled graphs (compare dashed black line with solid black and gray lines).  On the other hand, while the $k_s$NN classifier using the Frobenius norm on shuffled graphs, $\delta'$, must eventually converge to $L_s'$, its convergence rate is quite slow, so the classifier using standard invariants $\psi$ outperforms the simple $\delta'$ based $k_s$NN.

\begin{figure}[htbp]
	\centering
		\includegraphics[width=0.5\linewidth]{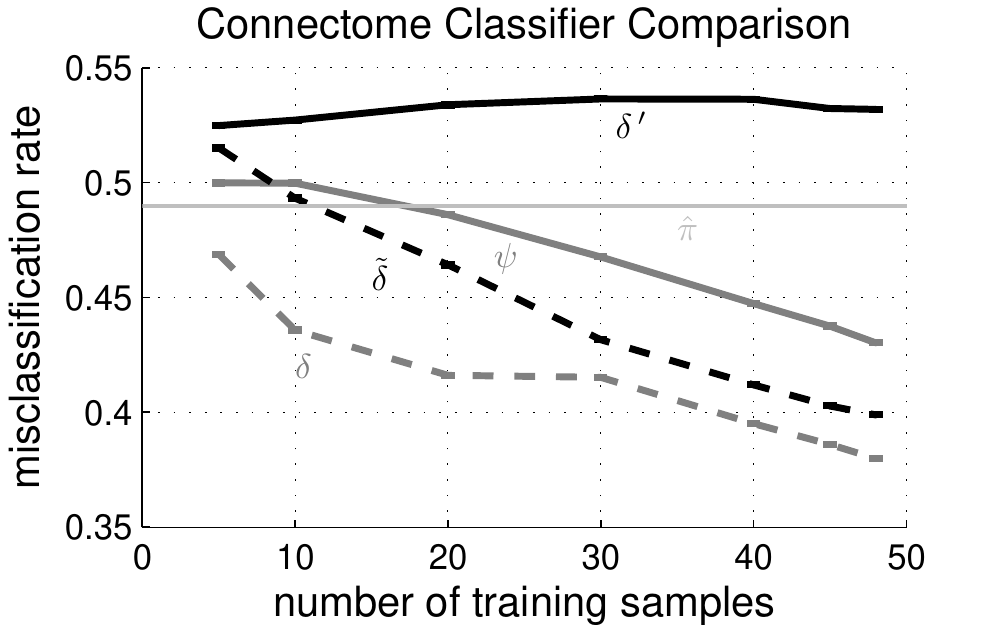}
		\caption{Connectome misclassification rates for various classifiers.  2000 Monte Carlo sub-samples of the data were performed for each $s$, such that errorbars were neglibly small.  Five classifiers were compared, as described in main text.  Note that when $s$ is larger than $20$, as predicted by theory, we have $\mh{L}^{\mh{\pi}}_s > \mh{L}^\psi_s > \mh{L}^{\mt{\delta}}_s > \mh{L}_s^\delta$. Moreover, $\mh{L}^{\delta'}_s > \mh{L}^{\mt{\delta}}_s > \mh{L}_s^\delta$.}
	\label{fig:1}
\end{figure}

% \begin{table}[h!]
% \caption{MR connectome leave-one-out misclassification rates for various $1$NN classifiers. labeled-$1$NN refers to the $1$NN classifier using . GM-$k_s$NN refers to the approximately graph-matched Frobenius norm $k_s$NN on the shuffled graphs.  $\phi(G)$-$1$NN refers to using the Euclidean norm distance $k_s$NN on the four graph-invariants described in the main text. Chance is the Bayes plugin classifier using only the prior probabilities.  }
% \begin{center}
% \begin{tabular}{|c|c|c|c|c|}
% \hline
% labeled-$1$NN & Shuffled-$1$NN & GM-$1$NN & $\phi(G)$-$1$NN & Chance\\
% \hline
% $37\%$ & $52\%$ & $41\%$ & $43\%$ & $49\%$ \\
%     \hline
% \end{tabular}
% \end{center}
% \label{tab:connectome}
% \end{table}%

\section{Discussion}

In this work, we address both the theoretical and practical limitations of classifying shuffled graphs, relative to labeled and unlabeled graphs.  Specifically, first we construct the notion of shuffled graphs and shuffled graph classifiers in a parallel fashion with labeled and unlabeled graphs/classifiers, as we were unable to find such notions in the literature.  Then, we show that shuffling the vertex labels results in an irretrievable situation, with a possible degradation of classification performance (Lemma \ref{thm:1}). Even if the vertex labels contained class-conditional signal, Bayes performance may remain unchanged (Lemma \ref{thm:2}).  Moreover, although one cannot recover the vertex labels, one can obtain a Bayes optimal classifier by solving a large number of graph isomorphism problems (Lemma \ref{thm:3}).  This resolves a theoretical conundrum: is there a set of graph invariants that can yield a universally consistent graph classifier?  When the generative distribution is unavailable, one can induce a consistent and efficient ``unshuffling'' classifier by using a graph-matching strategy (Corollary \ref{cor:Un_Plug}).  While this unshuffling approach dominates the more na\"ive approach (Lemma \ref{thm:tdomp}), it is intractable in practice due to the difficulty of graph matching and the large number of isomorphism sets.  Instead, a Frobenius norm $k_s$NN classifier applied to the adjacency matrices may be used, which is also universally consistent (Corollary \ref{cor:Sh_knn}).  
% Convergence rates may be considerably sped up by using a graph-matching Frobenius norm (Lemma \ref{thm:knninadmiss}).  
Surprisingly, none of the considered classifiers dominate the other for labeled data (Lemma \ref{thm:nodom}), yet asymptotically, we can order shuffled graph classifiers (Lemma \ref{thm:order}).

Because graph-matching is $\mc{NP}$-hard, we instead use an approximate graph-matching algorithm in practice (see \cite{VP11_QAP} for details).  Applying these $k_s$NN classifiers to a problem of considerable scientific interest---classifying human MR connectomes---we find that even with a relatively small sample size ($s \geq 20$), the approximately graph-matched $k_s$NN algorithm performs nearly as well as the $k_s$NN algorithm \emph{using} vertex labels, and slightly better than a $k_s$NN algorithm applied to a set of graph invariants proposed previously \cite{PCP10}.  This suggests that the asymptotics might apply even for very small sample sizes.  Thus, this theoretical insight has led us to improved practical classification performance.  Extensions to weighted or (certain) attributed graphs are straightforward.

\section*{Acknowledgment}

This work was partially supported by the Research Program in Applied Neuroscience.

% Generated by IEEEtran.bst, version: 1.13 (2008/09/30)

% \bibliography{/Users/jovo/Research/other/latex/library}

\begin{thebibliography}{10}
\providecommand{\url}[1]{#1}
\csname url@samestyle\endcsname
\providecommand{\newblock}{\relax}
\providecommand{\bibinfo}[2]{#2}
\providecommand{\BIBentrySTDinterwordspacing}{\spaceskip=0pt\relax}
\providecommand{\BIBentryALTinterwordstretchfactor}{4}
\providecommand{\BIBentryALTinterwordspacing}{\spaceskip=\fontdimen2\font plus
\BIBentryALTinterwordstretchfactor\fontdimen3\font minus
  \fontdimen4\font\relax}
\providecommand{\BIBforeignlanguage}[2]{{%
\expandafter\ifx\csname l@#1\endcsname\relax
\typeout{** WARNING: IEEEtran.bst: No hyphenation pattern has been}%
\typeout{** loaded for the language `#1'. Using the pattern for}%
\typeout{** the default language instead.}%
\else
\language=\csname l@#1\endcsname
\fi
#2}}
\providecommand{\BIBdecl}{\relax}
\BIBdecl

\bibitem{Devroye1996}
\BIBentryALTinterwordspacing
L.~Devroye, L.~Gy\"{o}rfi, G.~Lugosi, and L.~Gyorfi, \emph{{A Probabilistic
  Theory of Pattern Recognition}}.\hskip 1em plus 0.5em minus 0.4em\relax New
  York: Springer, 1996. [Online]. Available:
  \url{http://www.amazon.ca/exec/obidos/redirect?tag=citeulike09-20\&amp;path=ASIN/0387946187}
\BIBentrySTDinterwordspacing

\bibitem{Hagmann05}
P.~Hagmann, ``{From diffusion MRI to brain connectomics},'' Ph.D. dissertation,
  Institut de traitement des signaux, 2005.

\bibitem{Sporns2010}
\BIBentryALTinterwordspacing
O.~Sporns, \emph{{Networks of the Brain}}.\hskip 1em plus 0.5em minus
  0.4em\relax The MIT Press, 2010. [Online]. Available:
  \url{http://www.amazon.com/Networks-Brain-Olaf-Sporns/dp/0262014696}
\BIBentrySTDinterwordspacing

\bibitem{WhiteBrenner86}
J.~White, E.~Southgate, J.~N. Thomson, and S.~Brenner, ``{The structure of the
  nervous system of the nematode caenorhabditis elegans.}'' \emph{Philosophical
  Transactions of Royal Society London. Series B, Biological Sciences}, vol.
  314, no. 1165, pp. 1--340, 1986.

\bibitem{VP11_sigsub}
J.~T. Vogelstein, W.~R. Gray, R.~J. Vogelstein, and C.~E. Priebe, ``{Graph
  Classification using Signal Subgraphs: Applications in Statistical
  Connectomics},'' \emph{Submitted to IEEE PAMI}, 2011.

\bibitem{Duin2011}
\BIBentryALTinterwordspacing
R.~P.~W. Duin, E.~Pękalska, and E.~Pękalskab, ``{The dissimilarity space:
  Bridging structural and statistical pattern recognition},'' \emph{Pattern
  Recognition Letters}, vol. in press, no. April, May 2011. [Online].
  Available: \url{http://linkinghub.elsevier.com/retrieve/pii/S0167865511001322
  http://www.sciencedirect.com/science/article/pii/S0167865511001322
  http://dx.doi.org/10.1016/j.patrec.2011.04.019}
\BIBentrySTDinterwordspacing

\bibitem{A000088}
\BIBentryALTinterwordspacing
``{Number of graphs on n unlabeled nodes.}'' [Online]. Available:
  \url{http://oeis.org/A000088}
\BIBentrySTDinterwordspacing

\bibitem{Fortin1996}
\BIBentryALTinterwordspacing
S.~Fortin, ``{The Graph Isomorphism Problem},'' \emph{Technical Report,
  University of Alberta, Dept of CS}, 1996. [Online]. Available:
  \url{http://citeseer.ist.psu.edu/viewdoc/summary?doi=10.1.1.32.9419}
\BIBentrySTDinterwordspacing

\bibitem{Stone1977}
C.~J. Stone, ``{Consistent Nonparametric Regression},'' \emph{The Annals of
  Statistics}, vol.~5, no.~4, pp. 595--620, Jul. 1977.

\bibitem{VP11_super}
\BIBentryALTinterwordspacing
J.~T. Vogelstein, R.~J. Vogelstein, and C.~E. Priebe, ``{Are mental properties
  supervenient on brain properties?}'' \emph{Nature Scientific Reports}, vol.
  in press, p.~11, 2011. [Online]. Available:
  \url{http://arxiv.org/abs/0912.1672}
\BIBentrySTDinterwordspacing

\bibitem{PCP10}
\BIBentryALTinterwordspacing
H.~Pao, G.~A. Coppersmith, C.~E. Priebe, H.~P. Ao, G.~A.~C. Oppersmith, and
  C.~E.~P. Riebe, ``{Statistical inference on random graphs: Comparative power
  analyses via Monte Carlo},'' \emph{Journal of Computational and Graphical
  Statistics}, pp. 1--22, 2010. [Online]. Available:
  \url{http://pubs.amstat.org/doi/abs/10.1198/jcgs.2010.09004}
\BIBentrySTDinterwordspacing

\bibitem{priebe2010you}
C.~E. Priebe, G.~A. Coppersmith, and A.~Rukhin, ``{You say graph invariant, I
  say test statistic},'' \emph{Statistical Computing Statistical Graphics
  Newsletter}, vol.~21, no.~2, pp. 11--14, 2010.

\bibitem{Hagmann2010}
\BIBentryALTinterwordspacing
P.~Hagmann, L.~Cammoun, X.~Gigandet, S.~Gerhard, P.~{Ellen Grant}, V.~Wedeen,
  R.~Meuli, J.-p. Thiran, C.~J. Honey, and O.~Sporns, ``{MR connectomics:
  Principles and challenges},'' \emph{J Neurosci Methods}, vol. 194, no.~1, pp.
  34--45, 2010. [Online]. Available:
  \url{http://www.ncbi.nlm.nih.gov/entrez/query.fcgi?cmd=Retrieve\&db=PubMed\&dopt=Citation\&list\_uids=20096730}
\BIBentrySTDinterwordspacing

\bibitem{MRCAP11}
\BIBentryALTinterwordspacing
W.~R. Gray, J.~A. Bogovic, J.~T. Vogelstein, B.~A. Landman, J.~L. Prince, and
  R.~J. Vogelstein, ``{Magnetic resonance connectome automated pipeline: an
  overview.}'' \emph{IEEE pulse}, vol.~3, no.~2, pp. 42--8, Mar. 2010.
  [Online]. Available:
  \url{http://ieeexplore.ieee.org/xpl/articleDetails.jsp?arnumber=6173097}
\BIBentrySTDinterwordspacing

\bibitem{Garey1979}
\BIBentryALTinterwordspacing
M.~R. Garey and D.~S. Johnson, \emph{{Computers and Intractability: A Guide to
  the Theory of NP-Completeness (Series of Books in the Mathematical
  Sciences)}}.\hskip 1em plus 0.5em minus 0.4em\relax W. H. Freeman, 1979.
  [Online]. Available:
  \url{http://www.amazon.com/Computers-Intractability-NP-Completeness-Mathematical-Sciences/dp/0716710455}
\BIBentrySTDinterwordspacing

\bibitem{VP11_QAP}
J.~T. Vogelstein, J.~C.~M. Conroy, L.~J. Podrazik, S.~G. Kratzer, D.~E.
  Fishkind, R.~J. Vogelstein, and C.~E. Priebe, ``{(Brain) Graph Matching via
  Fast Approximate Quadratic Programming},'' \emph{Arxiv preprint}, 2011.

\end{thebibliography}
\bibliographystyle{IEEEtran}

\end{document}